\def\LM{{\mathrm{LM}}}
\def\LC{{\mathrm{LC}}}
\def \LM{{\rm LM}}
\def \LC{{\rm LC}}
\def\R{\mathbb{R}}
\def\F{\mathcal{F}}
\def\V{{\bf{V}}}
\def\idf{\mathcal{IF}}
\newtheorem{thm}{Theorem}[section]
\newtheorem{lem}[thm]{Lemma}
\newtheorem{rem}[thm]{Remark}
\newtheorem{defn}[thm]{Definition}
\newtheorem{prop}[thm]{Proposition}
\newtheorem{cor}[thm]{Corollary}
\newtheorem{exmp}[thm]{Example}
\begin{document}

\title{An Elimination Method to Solve\\ Interval Polynomial Systems}


\author{Sajjad Rahmany, Abdolali Basiri and Benyamin M.-Alizadeh\\ School of Mathematics and Computer Sciences,\\ Damghan University, Damghan, Iran.}

\maketitle
\begin{abstract}
There are several efficient methods to solve linear interval polynomial systems in the context of interval computations, however, the general case of interval polynomial systems is not yet covered as well. In this paper we introduce a new elimination method to solve and analyse interval polynomial systems, in general case. This method is based on computational algebraic geometry concepts such as polynomial ideals and Gr\"obner basis computation. Specially, we use the comprehensive G\"obner system concept to keep the dependencies between interval coefficients. At the end of paper, we will state some applications of our method to evaluate its performance.   
\end{abstract}


\section{Introduction}
Many computational problems arising from applied sciences deal with floating-point computation and so require to import polynomial equations containing error terms in computers. This redounds polynomial equations to appear with perturbed coefficients i.e the coefficients range in specific intervals and so these are called {\it interval polynomial equations}. Interval polynomial equations come naturally from several problems in engineering sciences such as control theory [35,36], dynamical systems [30] and so on. One of the most important problems in the context of interval polynomial equations is to analyse and study the stability and solutions of an (or a system of) interval polynomial(s).  More generally, the problem is to carry as much as possible information out from an interval polynomial system. Many scientific works are done in this direction using interval arithmetic [2], for instance computation of the roots in certain cases [6,7], however they do  not enable us to obtain the desired roots, at least approximately [6].
Another example consists those works which contain (the most popular) method to solve an interval polynomial equation by computing the roots of some exact algebraic polynomials, while it is hard to solve an algebraic equation of high degree which has its own complexity challenging problems [9,11,12,18,21]. There is also a new method described in [41] which counts the zeros of a univariate interval polynomial.
In addition to numerical methods, there are some attempts to combine numeric and symbolic methods to solve an interval polynomial system. In [8], Falai et al. state a modification of Wu's characteristic set method for interval polynomial systems, and use numerical approximation to find an interval containing the roots. 
The essential trick in this work is to omit all the terms with interval coefficients containing zero, which permits the division of interval coefficients simply. This consideration may fail some important polynomials.  

In this paper, we try to use exact symbolic methods to facilitate analysing interval polynomial systems thanks to algebraic elimination methods like  parametric computation techniques to analyse a parametric polynomial system which allows us to consider all exact polynomials arising from an interval polynomial. As we will state later, it is very important to keep the trace of interval coefficients during the computations. Roughly speaking we associate an auxiliary parameter to each interval coefficient provided that each parameter ranges over its own related interval only.  Nowadays there are important results [38,39], efficient algorithms [17,19,20,22,24,25] and powerful implementations [33,34] in the context of parametric computations and analysing parametric polynomial systems.
We introduce the new concept {\it interval Gr\"obner system} for a system of interval polynomials using the concept of {\it comprehensive Gr\"obner systems} [39] which is used to describe all different behaviours of a parametric polynomial system. Interval Gr\"obner system contains a finite number of systems where each one is a Gr\"obner basis for (non-interval) polynomial systems obtained from the main system. It is worth noting that against to [8], we don't omit any interval coefficient and cover all possible cases for the exact coefficients arising from the intervals. We also design an algorithm to compute interval Gr\"obner system of an interval polynomial system.
 
This paper is organized as follows. In Section 2 we state introductory definitions and recall interval arithmetic. 
In Section 3 we start to explain interval polynomials and their related concepts. We next receive to Section 4 which states the main idea behind this paper. To recall the concepts of computational algebraic tools we state Section 5 containing a brief introduction to Gr\"obner basis and comprehensive Gr\"obner system, together with their related algorithms. After, we describe our elimination method for interval polynomial systems in Section 6. Finally, as some applications and example, we state the Section 7 which contains two applied examples. 

\section{Preliminaries}
\label{pre}
In this section we recall the interval arithmetic and related concepts which are needed for the rest of this text. The main references of this section are [16] and [26]. Let $\R$ denote the set of real numbers while $\R^{*}$ is used to show the {\it extended real numbers set} i.e. $\R \cup \{-\infty,\infty\}$. 
\begin{defn}
Let $a,b\in \R^*$. We define $4$ kinds of {\it real intervals} defined by $a$ and $b$ as follows:
\begin{eqnarray}
\label{interval}
\begin{array}{rcl}
{\rm{Closed\ interval}} &:& [a,b] = \{x \mid a\leq x \leq b\} \ (a,b \ne \pm\infty) \\
{\rm{Left\ half\ open\ interval}} &:& (a,b] = \{x \mid a< x \leq b\}\ (b\ne \pm\infty)\\
{\rm{Right\ half\ open\ interval}} &:& [a,b) = \{x \mid a \leq x < b\}\ (a\ne \pm\infty)\\
{\rm{Open\ interval}} &:& (a,b) = \{x \mid a< x < b\} \\
\end{array}
\end{eqnarray} 
\end{defn}

It is worth noting that approximately all of existing texts in the subject of interval computation deal with closed intervals. 
Most of times we denote the intervals by capitals, and their lower (resp. upper) bounds by underbar (resp. overbar), as
$$X= [\underbar{X},\overline{X}]$$ 
to denote closed intervals. 
However, as there are some applied problems including non-closed intervals we preferred to consider all different types of intervals.

\begin{rem}
Having all different kinds of intervals at once, we use the notion $[a,b,i,j]$ where $i,j \in \{0,1\}$ to denote the intervals in (1), as follows:
\begin{eqnarray*}
[a,b,i,j] = \left\{
\begin{array}{lcl}
[a,b] & {\rm if} & i=j=1\\
(a,b] & {\rm if} & i=0, j=1\\

[a,b) & {\rm if} & i=1, j=0\\
(a,b)& {\rm if} & i=j=0\\
\end{array}
\right.
\end{eqnarray*} 
However when all of intervals come from one sort of presentation, we prefer to use the (1) form. 
\end{rem}
 Now we recall the interval arithmetic and discuss on the interval dependencies what will occur in the evaluation of interval expressions.  
 
There exist two equivalent ways to state interval arithmetic. The first is based on the endpoints of intervals while the second considers each interval as a subset of real numbers. 
Let $[a_1, b_1, i_1, j_1]$ and $[a_2, b_2, i_2, j_2]$ be two real intervals. Note that each real number $a$ is considered as $[a,a,1,1]$ which is called a degenerate interval. Four essential arithmetic operations are defined as follows:

\begin{eqnarray*}
[a_1, b_1, i_1, j_1] + [a_2, b_2, i_2, j_2] & = & [a_1 + a_2, b_1 + b_2, \min(i_1, i_2), \min(j_1, j_2)]\\ ~ 
 [a_1, b_1, i_1, j_1] - [a_2, b_2, i_2, j_2] & = & [a_1 -b_2, b_1 -a_2, \min(i_1, j_2), \min(j_1, i_2)]\\ ~
 [a_1, b_1, i_1, j_1] \times [a_2, b_2, i_2, j_2] & = & [a_kb_\ell, a_{k'}b_{\ell'}, \min(i_k,j_\ell), \min(i_{k'},j_{\ell'})]\\ ~
\end{eqnarray*}
where $a_kb_\ell$ and $a_{k'}b_{\ell'}$ are the minimum and maximum of the set $\{a_1a_2, a_1b_2, b_1a_2, b_1b_2\}$ respectively, and finally
\begin{eqnarray*} 
[a_1, b_1, i_1, j_1] / [a_2, b_2, i_2, j_2] & = & [a_1, b_1, i_1, j_1] \times [1/b_2, 1/a_2, j_2, i_2]
\end{eqnarray*}
provided that $a_2> 0$ or $b_2< 0$ or $a_2 = i_2 = 0$ or $b_2 = j_2 = 0$. 

Note in the above relations that all ambiguous cases $\infty-\infty$, $\pm\infty\times 0$, $\frac{\pm\infty}{\pm\infty}$ and $\frac{0}{0}$ will induce the biggest possible interval i.e. $\R$.

As an easy observation, when an interval $X=[a,b,i,j]$ with $a\ne b$ contains zero,  we can compute $1/X$ as follows:
\begin{itemize}
\item If $a=0$ then 
$$\frac{1}{X} = \frac{1}{[a,b,0,j]}=[\frac{1}{b},+\infty,j,0],$$
\item If $b=0$ then
$$\frac{1}{X} = \frac{1}{[a,b,i,0]}=[-\infty, \frac{1}{a},0,i],$$
\item If $ab<0$ then by seperating $X$ as 
$X=[a,0,i,0] \cup [0,b,0,j]$ we have 
$$\frac{1}{X} = \frac{1}{[a,0,i,0] \cup [0,b,0,j]} = [-\infty, \frac{1}{a},0,i] \cup [\frac{1}{b}, +\infty, j , 0].$$
\end{itemize}
Now consider $A$ and $B$ are two intervals as two sets of real numbers. We can state the above definitions of four essential arithmetic operations as 
$$ A\ op \ B = \{x\ | \ \exists\ a\in A, \ b\in B: \ x=a \ op \ b\}$$
where $op \in \{+,-,\times, /\}$.  
\begin{rem}
Although interval arithmetic seems to be compatible with real numbers arithmetic, but this affects distributivity of multiplication over addition and the existance of inverse elements. More preciesly
for each intervals $X,Y$ and $Z$,
\begin{itemize}
\item[$\bullet$] $X\times (Y+Z)\subseteq X\times Y+X\times Z$,

\noindent
and if $X$ is non-degenerated then 
\item[$\bullet$] $X\times \frac{1}{X} \ne 1$, but $1\in X\times \frac{1}{X}$ and 
\item[$\bullet$] $X+(-X) \ne 0$, but $0\in X+(-X)$. 
\end{itemize}
Furthermore, if $X$ contains some negative real numbers, then
$$X^n\ne \underbrace{X\cdots X}_{n \ times}.$$
To see this, let for instance $X=[a,b,i,j]$ where $a<0$ and $|a|<b$. Then, $X^2 = [0,b^2,1,j]$ while $X\times X = [ab,b^2, \min(i,j),j]$. To solve this inconsistency, we define the $n^,$th power of an interval for non-negative integer $n$ as follows:
\begin{eqnarray*}
[a,b,i,j]^n=\left\{
\begin{array}{lcl}
1 &  & n=0\\

[a^n, b^n, i, j] &  & 0\leq a\\

[b^n, a^n, j, i] &  & b\leq 0 \\

[0, \max(a^n,b^n), 1, c] &  & a< 0 < b  \\
\end{array}
\right.
\end{eqnarray*}
where 
$ c=\left\{
\begin{array}{crl}
i& & |b|<|a|\\
j& & otherwise
\end{array}
\right..
$
\end{rem}
Let us now evaluate some expressions to illustrate more challenging problems dealing with interval arithmetic. Let $f(x,y)=\frac{x}{x+y}$, $X=[1,2]$ and $Y=[1,3]$. We compute $f(X,Y)$ in two ways. The first way is to compute $f(X,Y)$ as an usual evaluation using interval arithmetic:
\begin{eqnarray}
\label{eq:2}
\frac{X}{X+Y} = [1,2]/[2,5] = [1/5, 1].
\end{eqnarray}
However, one can manipulate the expression to see
\begin{eqnarray}
\label{eq:3}
\frac{X}{X+Y} = \frac{1}{1+\frac{Y}{X}} = 1 /[3/2, 4] = [1/4 , 2/3]
\end{eqnarray}
 Let us separate $x$ and $y$ first depending on $f(x,y)$: we call $y$ (resp. $x$), a {\it first} (resp. {\it second}) {\it class} variable of $f$ as it appears one (resp. more than one) time in the structure of $f$. As it can be easily seen, the answer of (3) is a narrower interval and in fact the exact value. The reason is that $X$ is a second class variable for (2) and so it brings {\it dependency} between two parts of the expression. This is while there is no dependency in (3) given that $\frac{Y}{X}$ appears only one time, and so it is a first class variable.   Dependency is one of the crucial points of this paper to find the solution set of interval polynomial systems.

Dependencies are the main reason that causes to appear an amount of error by introducing larger intervals than the exact solution as all publications in this area try to avoid dependencies. Nevertheless it is possible to cancel dependencies by considering $X-X=0$ (see [16]) as well as $X/X=1$ easily, while sometimes this becomes a difficult work.

\section{Interval Polynomials}
\label{IntPoly}
Let $\R$ be the field of real numbers, considered as the ground field of computations all over the current text and consider the set $\{x_1,\ldots,x_n\}$  to be the set of variables.
\begin{defn}
Each polynomial of the form 
\begin{eqnarray}
\label{Def:1}
[f]=\sum_{i=1}^m [a_i,b_i,\ell_i, k_i] x_1^{\alpha_{i1}}\cdots x_n^{\alpha_{in}}
\end{eqnarray}
is called an interval polynomial, where $[a_i,b_i,\ell_i,k_i]$ is a real interval for each $i=1,\ldots,m$, and each power product $x_1^{\alpha_{i1}}\cdots x_n^{\alpha_{in}}$ is called a monomial where the powers are non negative integers. We denote the set of all interval polynomials by $[\R][x_1,\ldots,x_n]$.
\end{defn}
\begin{defn}
Let $[f]$ be an interval polynomial as defined in (4). The set of all polynomials arising from $[f]$ for different values of intervals in coefficients is called the family of $[f]$ and is denoted by $\mathcal{F}([f])$.
More preciesly:
$$\mathcal{F}([f]) = \{\sum_{i=1}^m c_i x_1^{\alpha_{i1}}\cdots x_n^{\alpha_{in}} \ | \ c_i \in [a_i,b_i,\ell_i, k_i], i=1,\ldots,m\}$$ 
\end{defn}
Similar to the family of an interval polynomial, we can define the family of a set of interval polynomials as follows:
\begin{defn}
\label{Def:2}
Let $S:=\{[f]_1,\ldots,[f]_\ell\}$ be a set of interval polynomials with $\mathcal{F}([f]_j) = \mathcal{F}_j$ for each $j=1,\ldots, \ell$. We define the family of $S$ to be the set $\mathcal{F}_1\times\cdots \times \mathcal{F}_\ell$, denoted by $\mathcal{F}(S)$. 
\end{defn}
 We now define the concept of solution set for an interval polynomial.
\begin{defn}
\label{Def:3}
For an interval polynomial $[f] \in {[\R][x_1,\ldots,x_n]}$, we say that $a=(a_1,\ldots,a_n) \in \R^n$ is a real solution or a real root of $[f]$, if there exists a polynomial $p \in \mathcal{F}([f])$ such that $p(a) = 0$.
Similarily, we say that a system of interval polynomials has a solution, if the contained interval polynomials  have a common solution.
\end{defn}
\begin{exmp}
Let us find the solution set of 
$$[-2,-1] x^2 + [1,2] x + [1,3]=0$$ 
where all intervals are closed. 
When $x\geq 0$, we have 
$$[-2,-1] x^2 + [1,2] x + [1,3] = [-2x^2+x+1, -x^2+2x+3] =[0,0]$$
So we must have 
$$0\leq -2x^2+x+1\ {\rm{and}}\ -x^2+2x+3\leq 0$$
which implies
$$x\in[1,3].$$
Similarly when $x\leq 0$, we have
$$[-2,-1] x^2 + [1,2] x + [1,3] = [-2x^2+2x+1, -x^2+x+3] =[0,0]$$
or equivalently 
$$0\leq -2x^2+2x+1\ {\rm{and}}\ -x^2+x+3\leq 0$$
which concludes 
$$x\in [-1.31, -0.36].$$
Thus the solution set of this interval polynomial is 
$$ [-1.31, -0.36] \cup [1,3].$$
It is notable that using interval arithmetic in the well-known solution way of a quadratic polynomial equation due to discriminant, we receive to 
$$[-2.15, -0.06]\cup [0.382 , 8.4]$$ 
as the solution set that contains an amount of error. 
\end{exmp}

\section{The idea}
\label{Idea}
In this section we are going to describe the problems which may occur using the usual elimination method on a system of interval polynomials. To facilitate the description, let us give an example. Consider the system containing 
$$f_1=[1,2]x_1+x_2+2x_3,\ f_2=[1,4]x_1+x_2+1,\ f_3=[3,4]x_1+x_2+4x_3.$$
Going to eliminate the variable $x_2$, we have
$$f_1-f_2=[-3,1]x_1+2x_3-1,\ f_3-f_2=[-1,3]x_1+4x_3-1.$$
One may now conclude that if we choose $0$ from both intervals $[-3,1]$ and $[-1,3]$ then the system has no solution because $2x_3-1=4x_3-1=0.$ However this case is impossible since both of intervals $[-3,1]$ and $[-1,3]$ can not be zero at once! To see this, notice that $[-3,1]$ comes from $[1,2]-[1,4]$ and so for $[-3,1]$ being zero, $[1,4]$ must give some values in $[1,2]$. On the other hand, 
$[-1,3]$ comes from $[3,4]-[1,4]$ and so this interval can be zero only when $[1,4]$ gives some values in $[3,4]$ and this is a contradiction. This simple linear system shows that the usual elimination method can cause a wrong decision or appearing some extra values in the solution set. The main reason is that we forgot the dependencies between $[-3,1]$ and $[-1,3]$ during the computation while they are both dependent on $[1,4]$ and so they are dependent.

To solve this problem, we must keep the trace of each interval coefficient. In doing so, our idea is to use a parameter instead of each interval, to see how new coefficients are built. For instance let us substitute $[1,2]$, $[1,4]$ and $[3,4]$ by $a$, $b$ and $c$ as parameters in the above example. So we have  
$$\tilde{f_1}=ax_1+x_2+2x_3,\ \tilde{f_2}=bx_1+x_2+1,\ \tilde{f_3}=cx_1+x_2+4x_3$$
and doing elimination steps we have
$$\tilde{f_1}-\tilde{f_2}=(a-b)x_1+2x_3-1,\ \tilde{f_3}-\tilde{f_2}=(c-b)x_1+4x_3-1.$$
Now one can conclude that under the assumption that $a-b=0$, the coefficient $c-b$ can not be zero. The reason is that if $c-b=0$ then $a=c$ while $1\leq a \leq 2$ and $3\leq c \leq 4$. 
Therefore using parameters prevent us taking wrong decisions about the solution set. 

The main question here is that how we can use elimination method when the coefficients contain some parameters. In fact as we will state in the next sections, we do the elimination steps thanks to Gr\"obner basis and for the parametric case, we use the concept of comprehensive Gr\"obner system, to see the simplest possible polynomials to solve.  So our idea is to convert the interval polynomial systems to a parametric polynomial system and use the parametric algorithmic aspects, of course with some modifications, to solve the parametric system by dividing the solution set into finitely many components. At the end, we convert the result to see the solution set of the interval polynomial system. 

\section{Gr\"obner Bases and Comprehensive Gr\"obner Systems}
\label{GB}
In this section we recall the concepts and notations of ordinary and parametric polynomial rings. 
Let $K$ be a  field and $ x_1,\ldots,x_n$ be $n$ (algebraically independent) variables. Each power product $x_1^{\alpha_1}\cdots x_n^{\alpha_n}$ is called a  monomial where $\alpha_1,\ldots,\alpha_n\in \mathbf{Z}_{\geq 0}$. Because of simplicity, we abbreviate such monomials by ${\bf x}^\alpha$ where ${\bf x}$ is used for the sequence $x_1,\ldots,x_n$ and $\alpha=(\alpha_1,\ldots,\alpha_n)$. We can sort the set of all monomials over $K$ by special types of total orderings so called monomial orderings, recalled in the following definition.
\begin{defn}
The total ordering $\prec$ on the set of monomials is called a monomial ordering whenever for each monomials ${\bf x}^\alpha, {\bf x}^\beta$ and ${\bf x}^\gamma$ we have:
\begin{itemize}
\item
${\bf x}^\alpha\prec {\bf x}^\beta \Rightarrow {\bf x}^\gamma {\bf x}^\alpha\prec {\bf x}^\gamma {\bf x}^\beta$, and
\item
$\prec$ is well-ordering.
\end{itemize}
\end{defn} 
There are infinitely many monomial orderings, each one is convenient for a special type of problems. Among them, we point to pure and graded reverse lexicographic orderings denoted by $\prec_{lex}$ and $\prec_{grevlex}$ as follows. assume that $x_n\prec \cdots \prec x_1$. We say that
\begin{itemize}
\item ${\bf x}^\alpha \prec_{lex} {\bf x}^\beta$ whenever 
$$\alpha_1=\beta_1,\ldots,\alpha_i=\beta_i \ {{\rm and}} \ \alpha_{i+1}<\beta_{i+1}$$
for an integer $1\leq i < n$. 
\item ${\bf x}^\alpha \prec_{grevlex} {\bf x}^\beta$ if
$$\sum_{i=1}^n \alpha_i < \sum_{i=1}^n \beta_i $$
breaking ties when there exists an integer $1\leq i < n$ such that
$$\alpha_n=\beta_n,\ldots,\alpha_{n-i}=\beta_{n-i} \ { {\rm and}} \ \alpha_{n-i-1}>\beta_{n-i-1}.$$
\end{itemize} 
It is worth noting that the former has many theoretical importance while the latter speeds up the computations and carries fewer information out. 

Each $K-$linear combination of monomials is called a polynomial on $x_1,\ldots,x_n$ over $K$. The set of all polynomials has the ring structure with usual polynomial addition and multiplication, and is called the polynomial ring on $x_1,\ldots,x_n$ over $K$ and denoted by $K[x_1,\ldots,x_n]$ or just by $K[{\bf x}]$. Let $f$ be a polynomial and $\prec$ be a monomial ordering. The greatest monomial w.r.t. $\prec$ contained in $f$ is called the leading monomial of $f$, denoted by $\LM(f)$ and the coefficient of $\LM(f)$ is called the leading coefficient of $f$ which is pointed by $\LC(f)$. Further, if $F$ is a set of polynomials, $\LM(F)$ is defined to be $\{\LM(f) | f\in F\}$ and if $I$ is an ideal, $in(I)$ is the ideal generated by $\LM(I)$ and is called the initial ideal of $I$.  We are now going to remind the concept of Gr\"obner basis of a polynomial ideal which carries lots of useful information out about the ideal.  
\begin{defn}
Let $I$ be a polynomial ideal of $K[{\bf x}]$ and $\prec$ be a monomial ordering. The finite set $G\subset I$ is called a Gr\"obner basis of $I$ if for each non zero polynomial $f\in I$, $\LM(f)$ is divisible by $\LM(g)$ for some $g\in G$. 
\end{defn}
Using the well-known Hilbert basis theorem (See [4] for example), it is proved that each polynomial ideal possesses a Gr\"obner basis with respect to each monomial ordering. There are efficient algorithms also to compute Gr\"obner basis. The first and the most simplest one is the Buchberger algorithm which is devoted in the same time of introduction of Gr\"obner basis concept  while he most efficient known algorithm is the Faug\`ere's F$_5$ algorithm [10] and another signature-based algorithms such as G$^2$V [13] and GVW [14].
It is worth noting that Gr\"obner basis of an ideal is not unique necessarily.  To have unicity, we define the reduced Gr\"obner basis concept.
As an important fact  the reduced Gr\"obner basis of an ideal is unique up to the monomial ordering.
\begin{defn}
Let $G$ be a Gr\"obner basis for the ideal $I$ w.r.t. $\prec$. Then $G$ is called a reduced Gr\"obner basis of $I$ whenever each $g\in G$ is monic, i.e. $\LC(g)=1$ and none of the monomials appearing in $g$ is divisible by $\LM(h)$ for each $h\in G\setminus \{g\}$.
\end{defn}
One of the most important applications of Gr\"obner basis is its help to solve a polynomial system. Let 
$$
\left\{
\begin{array}{ccc}
f_1&=&0\\
&\vdots&\\
f_k&=&0\\
\end{array}
\right.
$$
be a polynomial system and $I=\langle f_1,\ldots,f_k\rangle$ be the ideal generated by $f_1,\ldots,f_k$. We define the affine variety associated to the above system or equivalently to the ideal $I$ to be
$${\bf V}(I) = {\bf V}(f_1,\ldots,f_k) = \{\alpha \in \overline{K}^n | f_1(\alpha)=\cdots = f_k(\alpha) = 0\}$$
where $\overline{K}$ is used to denote the algebraic closure of $K$.
Now let $G$ be a Gr\"obner basis for $I$ with respect to an arbitrary monomial ordering. As an interesting fact, $I=\langle G\rangle$ which implies that ${\bf V}(I) = {\bf V}(G)$. This is the key computational trick to solve a polynomial system. Let us continue by an example.
\begin{exmp}
\label{Ex:1}
We are going to solve the following polynomial system:
$$
\left\{
\begin{array}{ccc}
x^2-xyz+1&=&0\\
y^3+z^2-1&=&\\
xy^2+z^2&=&0\\
\end{array}
\right.
$$
By the nice properties of pure lexicographical ordering, the reduced Gr\"obner basis of the ideal $I=\langle x^2-xyz+1, y^3+z^2-1, xy^2+z^2 \rangle \subset \mathbf{Q}[x,y,z]$ has the form
$$G=\{g_1(z), x-g_2(z), y-g_3(z)\}$$
 w.r.t. $z\prec_{lex}y\prec_{lex}x$, where
 {\small
 $$
 \left[
 \begin{array}{ccl}
 g_1(z)&=& z^{15}-3z^{14}+5z^{12}-3z^{10}-z^9-z^8+4z^6-6z^4+4z^2-1\\
 g_2(z)&=& 2z^{14}-9z^{13}+11z^{12}+2z^{11}-7z^{10}-3z^9+2z^8-z^7+4z^6+\\
 &&+7z^5-10z^4-6z^3+11z^2+2z-4\\
 g_3(z)&=&z^{13}-3z^{12}+z^{11}+2z^{10}+z^9-z^8-2z^6+2z^4-z^3-3z^2+1.\\
 \end{array}
 \right.
  $$ 
}
This special form of Gr\"obner basis for this system allows us to find ${\bf V}(G)$ by solving only one univariate polynomial $g_1(z)$ and putting the roots into the two last polynomials in $G$.
\end{exmp}

Suppose now that the same system of Example 5.4 is given as follows with parametric coefficients on parameters are $a,b$ and $c$:
$$
\left\{
\begin{array}{ccc}
{a}x^2-({a}^2-{b}+1)xyz+1&=&0\\
y^3+{c}^2z^2-1&=&\\
({a}+{b}+{c})xy^2+z^2&=&0\\
\end{array}
\right.
$$
The solutions of this system depend on the values of parameters apparently as we can see that the system has no solutions whenever ${a}=0$ and ${b}=1$ while it converts to the system of Example 5.4 for ${a}=1, {b}=1$ and ${c}=-1$ and so has some solutions. To manage all of different behaviours of parameters which cause to different behaviour of the main system, we recall the concept of comprehensive Gr\"obner system in the sequel. By this we can divide the space of parameters, i.e. $\overline{K}^t$  into a finite number of partitions, for which the general form of polynomials contained in assigned Gr\"obner basis is known.
 
 Let $K$ be a field and ${\bf a} := a_1,\ldots,a_t$ and ${\bf x}:=x_1,\ldots,x_n$ be the sequences of parameters and variables respectively. We call $K[{\bf a}][{\bf x}]$, the parametric polynomial ring over $K$, with parameters ${\bf a}$ and variables ${\bf x}$. This ring is in fact the set of all parametric polynomials as 
$$\sum_{i=1}^{m} p_{i} {\bf x}^{\alpha_i}$$
where $p_{i}\in K[{\bf a}]$ is a polynomial on ${\bf a}$ with coefficients in $K$, for each $i$. 

\begin{defn}
\label{Def:CGS}
Let $I\subset K[{\bf a}][{\bf x}]$ be a parametric ideal and $\prec$ be a monomial ordering on ${\bf x}$. Then the set 
$$\mathcal{G}(I)=\{(E_i,N_i,G_i) \mid i=1,\ldots,\ell\} \subset K[{\bf a}]\times K[{\bf a}]\times K[{\bf a}][{\bf x}]$$
is said a comprehensive  Gr\"obner system for $I$ if for each $(\lambda_1,\ldots,\lambda_t)\in \overline{K}^t$ and each specialization
\begin{eqnarray*}
\sigma_{(\lambda_1,\ldots,\lambda_t)} :& K[{\bf a}][{\bf x}] &\rightarrow  \ \ \ \ \ \ \ \overline{K}[{\bf x}]\\
 &\sum_{i=1}^{m} p_{i} {\bf x}^{\alpha_i} &\mapsto \sum_{i=1}^{m} p_{i}(\lambda_1,\ldots,\lambda_t) {\bf x}^{\alpha_i}
\end{eqnarray*}
there exists an $1\leq i \leq \ell$ such  that $(\lambda_1,\ldots,\lambda_t) \in {\bf V}(E_i) \setminus {\bf V}(N_i)$ and $\sigma_{(\lambda_1,\ldots,\lambda_t)}(G_i)$ is a Gr\"obner basis for $\sigma_{(\lambda_1,\ldots,\lambda_t)}(I)$ with respect to $\prec$. Because of simplicity, we call $E_i$ and $N_i$ the null and non-null  conditions respectively.
\end{defn}
Remark that, by [39, Theorem 2.7], every parametric ideal has a comprehensive Gr\"obner system.
Now we give an example from [20] to illustrate the definition of comprehensive Gr\"obner system.
\begin{exmp}
Consider the following parametric polynomial system in $\mathbf{Q}[a,b,c][x,y]$:
\begin{eqnarray*}
\Sigma:\left\{
\begin{array}{lll}
ax-b&=&0\\
by-a&=&0\\
cx^2-y&=&0\\
cy^2-x&=&0
\end{array}
\right.
\end{eqnarray*}
Choosing the graded reverse lexicographical ordering $y\prec x$, we have the following comprehensive Gr\"obner system:
\begin{figure}[H]
 \centering {\small
 \begin{tabular}{|c||c||c|}
 \cline{1-3}
$G_i$&$E_i$&$N_i$\\
 \cline{1-3}
 $\{1\}$&$\{\  \}$&$\{a^6-b^6, a^3c-b^3, b^3c-a^3,$\\
 && $ac^2-a, bc^2-b\}$\\
 \cline{1-3}
$ \{bx-acy, by-a\}$&$\{a^6-b^6, a^3c-b^3, b^3c-a^3,$&$\{b\}$\\
& $ac^2-a, bc^2-b\}$&\\
\cline{1-3}
 $\{cx^2-y, cy^2-x\}$&$\{a, b \}$&$\{c\}$\\
\cline{1-3}
$\{x,y\}$& $\{a,b,c\}$ & $\{\  \}$\\
\cline{1-3}
 \end{tabular}
 }
 \end{figure}
For instance, for the specialization $\sigma_{(1,1,1)}$ for which $a\mapsto 1, b\mapsto 1$ and $c\mapsto 1$, 
$$\sigma_{(1,1,1)}(\{bx-acy, by-a\})=\{x-y,y-1\}$$ 
is a Gr\"obner basis of $\sigma_{(1,1,1)}(\langle \Sigma \rangle)$. 
\end{exmp}

It is worth noting that if ${\bf V}(E_i) \setminus {\bf V}(N_i) = \emptyset$ for some $i$, then the triple $(E_i,N_i,G_i)$ is useless, and so it must be omitted from the comprehensive Gr\"obner system. In this case we say that the pair $(E_i,N_i)$ is {\it inconsistent}. It is easy to see that inconsistency occurs if and only if $N_i \subset \sqrt{\langle E_i \rangle}$ and so we need to an efficient radical membership test to determine inconsistencies. In [19, 20] there is  a new and efficient algorithm to compute comprehensive Gr\"obner system of a parametric polynomial ideal which uses a new and powerful radical membership criterion. Therefore we prefer to employ this algorithm so called {\sc PGB} algorithm in our computations. 
Another essential trick which is used in [20] is the usage of {\it minimal Dickson basis} which reduces the content of computations in {\sc PGB}. Before explain it, 
let us recall some notations which are used in the structure of {\sc PGB}. Let $\prec_{\bf x}$ and $\prec_{\bf a}$ be two monomial orderings on $K[\bf x]$ and $K[\bf a]$ respectively. Let also $\prec_{{\bf x,a}}$ be the block ordering of $\prec_{\bf x}$ and $\prec_{\bf a}$, comparing two parametric monomials by $\prec_{\bf x}$, breaking tie by $\prec_{\bf a}$. For a parametric polynomial $f \in K[\bf a][\bf x]$, we denote by $\LM_{\bf x}(f)$ (resp. by $\LC_{\bf x}(f)$) the leading monomial (resp. the leading coefficient) of $f$ when it is considered as a polynomial in $K({\bf a})[{\bf x}]$, and so $\LC_{\bf x}(f) \in K[\bf a]$. 

\begin{defn}
\label{Def:MDB}
By the above notations, let $P\subset K[{\bf a}][{\bf x}]$ be a set of parametric polynomials and $G\subset P$. Then, $G$ is called a minimal Dickson basis of $P$ denoted by ${\sc MDBasis}(P)$, if: 
\begin{itemize}
\item
For each $p\in P$, there exist some $g\in G$ such that $\LM_{\bf x}(g) \mid \LM_{\bf x}(p)$ and
\item
For each two distinct polynomials in $G$ as $g_1$ and $g_2$, none of $\LM_{\bf x}(g_1)$ and $\LM_{\bf x}(g_2)$ divides another. 
\end{itemize}
\end{defn}
The case which occurs in {\sc PGB} to compute a minimal Dickson basis for $P$ is only when $P$ is a Gr\"obner basis for $\langle P \rangle$ itself w.r.t. $\prec_{{\bf x}, {\bf a}}$ and $P \cap K[{\bf a}]=\{0\}$. In this situation, it suffices by Definition 5.7 to omitt all polynomials $p$ from $P$ for which there exists a $p'\in P$ such that $\LM_{\bf x}(p') \mid \LM_{\bf x}(p)$. 

The {\sc PGB} algorithm as is shown below, uses {\sc PGB-main} algorithm to introduce new branches in computations. 
\begin{algorithm}[H]
\caption{{\sc PGB}}
\begin{algorithmic}[1]
\Procedure {\sc PGB}{$P, \prec_{\bf a}, \prec_{\bf x}$} 
    \State {$E,N:=\{\ \},\{1\};$}
    \State {$\prec_{{\bf x}, {\bf a}}:=$The block ordering of $\prec_{\bf x}, \prec_{\bf a}$}
    \State {\bf Return} {\sc PGB-main$(P,E,N,\prec_{{\bf x}, {\bf a}})$};
\EndProcedure
\end{algorithmic}
\end{algorithm}

The main work of {\sc PGB-main} is to create all necessary branches and import them in comprehensive Gr\"obner system at output. 
In this algorithm $A*B$ is defined to be the set $\{ab \mid a\in A, b\in B \}$.
\begin{algorithm}[H]
\caption{{\sc PGB-main}}
\begin{algorithmic}
\Procedure {{\sc PGB-main}}{$P, E, N, \prec_{{\bf x}, {\bf a}}$}
    
    \State {$G:=$ The reduced Gr\"obner basis for $P$ w.r.t. $\prec_{{\bf a}, {\bf x}}$}
    \If {$1 \in G$}
    	\State {\bf Return} $(E,N,\{1\})$;
    \EndIf
    \State {$G_r:=G \cap K[\bf a];$}
    \If {{\sc IsConsistent}$(E,N*G_r)$}
    	\State {$PGB:= \{(E,N*G_r,\{1\})\}$};
    \Else
    	\State{$PGB:=\emptyset$};
    \EndIf
    \If {{\sc IsConsistent}$(G_r,N)$}
    	\State {$G_m:=${\sc MDBasis}$(G \setminus G_r);$}
    \Else
    	 \State {\bf Return} $(PGB)$;
    \EndIf
    \State{$h:={\rm lcm}(h_1,\ldots,h_k)$, where $h_i = \LC_{\bf x}(g_i)$ and $g_i \in G_m$};
    \If{{{\sc IsConsistent}$(G_r,N*\{h\})$}}
    		\State {$PGB:=PGB \cup \{(G_r,N*\{h\},G_m)\}$;}
    \EndIf
    \For{$i=1,\ldots,k$}
    	\State {$PGB:=PGB  \ \cup $ {\sc PGB-main}$(G\setminus G_r, G_r \cup \{h_i\},N*\{\prod_{j=1}^{i-1} h_j\},\prec_{{\bf a}, {\bf x}})$}
     \EndFor
\EndProcedure
\end{algorithmic}
\end{algorithm}
As it is shown in the algorithm, it computes first a Gr\"obner basis of the ideal $\langle P \rangle$ over $K[\bf{a,x}]$ i.e. $G$, before performing any branches based on parametric constraints, according to [20, Lemma 32] as follows:
\begin{lem}
By the notations used in the algorithm, for each specialization $\sigma_{(\lambda_1,\ldots,\lambda_t)}$ if
$$(\lambda_1,\ldots,\lambda_t) \in {\bf V}(G_r) \setminus {\bf V}(\prod_{g\in G\setminus G_r}\LC_{\bf x}(g))$$ 
then $\sigma_{(\lambda_1,\ldots,\lambda_t)}(G)$ is a Gr\"obner basis for $\sigma_{(\lambda_1,\ldots,\lambda_t)}(\langle P \rangle)$.
\end{lem}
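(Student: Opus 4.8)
Write $\sigma := \sigma_{(\lambda_1,\dots,\lambda_t)}$ and $\lambda := (\lambda_1,\dots,\lambda_t)$, and throughout regard $G$ as the reduced Gr\"obner basis of $\langle P\rangle$ in the ordinary polynomial ring $K[{\bf a},{\bf x}]$ equipped with the block ordering $\prec_{{\bf x},{\bf a}}$ in which the ${\bf x}$-block is compared first. The plan is to first settle the purely structural part of the claim and then reduce the Gr\"obner property to Buchberger's S-polynomial criterion over $\overline{K}[{\bf x}]$.

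First I would record the easy consequences of the hypotheses on $\lambda$. Since $\lambda\in\V(G_r)$, every $g\in G_r=G\cap K[{\bf a}]$ satisfies $\sigma(g)=0$; since $\lambda\notin\V\bigl(\prod_{g\in G\setminus G_r}\LC_{\bf x}(g)\bigr)$, each $g\in G\setminus G_r$ has $\sigma(\LC_{\bf x}(g))\ne 0$, whence $\sigma(g)\ne 0$ and $\LM_{\bf x}(\sigma(g))=\LM_{\bf x}(g)$. Thus $\sigma(G)=\{\sigma(g)\mid g\in G\setminus G_r\}\cup\{0\}$. Because $\langle G\rangle=\langle P\rangle$ and $\sigma$ maps $K[{\bf a}][{\bf x}]$ onto $\overline{K}[{\bf x}]$, the ideal generated by $\sigma(G)$ --- equivalently by $\sigma(G\setminus G_r)$ --- is exactly $\sigma(\langle P\rangle)$; so it remains to prove that $\sigma(G\setminus G_r)$ is a Gr\"obner basis of that ideal with respect to $\prec_{\bf x}$, the extra $0$ being irrelevant.

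For this, by Buchberger's criterion it suffices to show that $\S(\sigma(g_i),\sigma(g_j))$ reduces to $0$ modulo $\sigma(G\setminus G_r)$ in $\overline{K}[{\bf x}]$ for every pair $g_i,g_j\in G\setminus G_r$. Here I would lift the computation to $K[{\bf a},{\bf x}]$: as $G$ is a Gr\"obner basis there, $\S(g_i,g_j)$ has a standard representation $\S(g_i,g_j)=\sum_{g\in G}h_g\,g$ in which the $\prec_{{\bf x},{\bf a}}$-leading monomial of every $h_g\,g$ is bounded by that of $\S(g_i,g_j)$. Applying $\sigma$ annihilates the summands with $g\in G_r$, and one then checks that, after clearing the nonzero scalars $\sigma(\LC_{\bf x}(g_i))$, $\sigma(\LC_{\bf x}(g_j))$ and one ${\bf x}$-monomial, the left-hand side becomes $\S(\sigma(g_i),\sigma(g_j))$ up to terms of strictly smaller $\prec_{\bf x}$-leading monomial, while each surviving $\sigma(h_g)\sigma(g)$ on the right has ${\bf x}$-leading monomial at most $\lcm(\LM_{\bf x}(g_i),\LM_{\bf x}(g_j))$. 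That is exactly a standard representation of $\S(\sigma(g_i),\sigma(g_j))$ over $\sigma(G\setminus G_r)$, so the S-polynomial reduces to $0$.

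The step I expect to be the main obstacle is this last piece of bookkeeping: under $\sigma$ the ${\bf a}$-monomials occurring in the cofactors $h_g$ and in the coefficients $\LC_{\bf x}(g)$ may collapse or cancel, so one must verify both that no spurious cancellation among top ${\bf x}$-terms is introduced and that the $\prec_{\bf x}$-degree bound of the representation is preserved. Both points rest on the two hypotheses on $\lambda$ --- the first removing precisely the pure-parameter part $G_r$, the second leaving every other ${\bf x}$-leading monomial intact --- together with the priority the block ordering $\prec_{{\bf x},{\bf a}}$ gives to the ${\bf x}$-variables. Alternatively, the whole argument can be obtained in one step from the classical stability theorem for Gr\"obner bases under specialization of parameters, whose hypotheses are met verbatim by the displayed condition on $\lambda$; indeed this lemma is [20, Lemma 32].
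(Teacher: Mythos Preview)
The paper does not prove this lemma at all: it simply quotes it as \emph{[20, Lemma 32]} (Kapur--Sun--Wang), with no argument supplied. So there is nothing to compare your sketch against; your final sentence, citing that same reference, is already the whole of the paper's ``proof.''

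That said, your sketch is the standard route to this specialization-stability result and is essentially correct. Two small comments on the part you flagged as delicate. First, the S-polynomial in $K[{\bf a},{\bf x}]$ with respect to $\prec_{{\bf x},{\bf a}}$ is formed from leading terms that carry an ${\bf a}$-monomial factor (namely $\LM_{\bf a}(\LC_{\bf x}(g))$), so $\sigma(S(g_i,g_j))$ and $S(\sigma(g_i),\sigma(g_j))$ differ by a nonzero scalar and possibly lower $\prec_{\bf x}$-terms, exactly as you say; the clean way to avoid this bookkeeping is to bypass Buchberger and argue directly that any nonzero $f\in\sigma(\langle P\rangle)$ lifts to some $F\in\langle P\rangle$ with $\LM_{\bf x}(F)=\LM_{\bf x}(f)$, and then the block ordering forces $\LM_{{\bf x},{\bf a}}(F)$, hence $\LM_{\bf x}(F)$, to be divisible by $\LM_{\bf x}(g)$ for some $g\in G\setminus G_r$. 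Second, your claim that $\langle\sigma(G)\rangle=\sigma(\langle P\rangle)$ uses that $\sigma$ is a surjective ring homomorphism $K[{\bf a}][{\bf x}]\to\overline{K}[{\bf x}]$ together with $\langle G\rangle=\langle P\rangle$; this is fine, but worth stating, since for non-surjective specializations the image of an ideal need not be an ideal.
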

After this, the algorithm computes a minimal Dickson basis i.e. $G_m$ and continues by taking a decision for each situation that one of the leading coeffiecients of $G_m$ is zero. By this, {\sc PGB-main} constructs all necessary branches to import in comprehensive Gr\"obner system. All over the algorithm, when it needs to add a new branch $(E_i,N_i,G_i)$ into the system, the algorithm {\sc IsConsistent} is used as follow to test the consistency of parametric conditions $(E_i,N_i)$.

\begin{algorithm}[H]
\caption{{\sc IsConsistent}}
\begin{algorithmic}
\Procedure {{\sc IsConsistent}} {$E,N$}
    \State {$flag:=$false;}
    \For {$g\in N$ {\bf while} $flag=$false}
    	\If{$g\notin \sqrt{\langle E\rangle}$}
    		\State{$flag:=$true;}
    	\EndIf
    \EndFor
    \State{{\bf Return} $flag$};
\EndProcedure
\end{algorithmic}
\end{algorithm}
The main part of this algorithm is radical membership test. The powerful trick which is used in [19,20] to radical membership check is based on linear algebra methods tackling  with a probabilistic check. We refer the reader to [20, Section 5] for more details.

\section{Elimination Method to Solve Interval Polynomial Systems}
\label{Elim}
In this section we introduce the new concept of {\it interval Gr\"obner system} and its related definitions and statements. 

 Now we state the following proposition as an immediate consequence of Definition 3.4. Recall that for  a polynomial  system $S \subset \R[x_1,\ldots,x_n]$ the variety of $S$ is the set of all complex solutions of $S$, denoted by $\V(S)$.
\begin{prop}
A system $ [S]$ of interval polynomials has a solution if and only if there exists a polynomial system $S$ in $\F( [S])$ with $\V(S)\neq \emptyset$.  
\end{prop}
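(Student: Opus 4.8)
The plan is to unwind the definitions and observe that the statement is almost a tautology once Definition 3.4 and Definition 3.2 are laid side by side. Recall that a system $[S] = \{[f]_1,\ldots,[f]_\ell\}$ of interval polynomials has a solution (Definition 3.4) precisely when the interval polynomials $[f]_1,\ldots,[f]_\ell$ have a \emph{common} real root, i.e.\ there is a point $a = (a_1,\ldots,a_n) \in \R^n$ and, for each $j$, some $p_j \in \F([f]_j)$ with $p_j(a) = 0$. On the other hand, an element of $\F([S])$ is, by Definition 3.3, a tuple $S = (p_1,\ldots,p_\ell)$ with $p_j \in \F([f]_j)$, viewed as an (ordinary) polynomial system; its variety $\V(S)$ is nonempty exactly when the $p_j$ have a common complex zero.

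First I would prove the forward direction: assume $[S]$ has a solution $a \in \R^n$. By Definition 3.4 there exist $p_j \in \F([f]_j)$, $j = 1,\ldots,\ell$, with $p_j(a) = 0$ for all $j$. Then $S := (p_1,\ldots,p_\ell) \in \F([S])$ by Definition 3.3, and $a \in \V(S)$ since $a \in \R^n \subset \mathbb{C}^n$ is a common zero of all the $p_j$; hence $\V(S) \neq \emptyset$. For the converse, suppose some $S = (p_1,\ldots,p_\ell) \in \F([S])$ has $\V(S) \neq \emptyset$. Here one must be slightly careful: $\V(S)$ consists of \emph{complex} solutions, whereas a solution of $[S]$ in the sense of Definition 3.4 is required to be a point of $\R^n$. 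The key observation is that the coefficients of each $p_j$ are real numbers (they lie in real intervals), so $p_1,\ldots,p_\ell \in \R[x_1,\ldots,x_n]$, and the nonemptiness of the complex variety of a real polynomial system does \emph{not} in general give a real point. So I expect the proposition as literally stated to require either (i) reading ``solution'' for the interval system also over $\mathbb{C}$ — consistent with the sentence ``the variety of $S$ is the set of all complex solutions'' just inserted before the statement — in which case the converse is immediate by taking any $a \in \V(S)$ and noting $p_j \in \F([f]_j)$, $p_j(a) = 0$; or (ii) an implicit convention, used throughout the interval-computation literature, that one works over $\mathbb{R}$ and ``has a solution'' is interpreted accordingly.

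Accordingly, the cleanest route is to make the ground-field convention explicit at the outset of the proof, note that $\F([S])$ consists of systems with real coefficients, and then the equivalence is just the two one-line implications above, each obtained by choosing the witnessing tuple $(p_1,\ldots,p_\ell)$ and the witnessing point $a$ and passing between the phrase ``common zero of the $p_j$'' (Definition of $\V$) and ``$\forall j\, \exists p_j \in \F([f]_j): p_j(a)=0$'' (Definition 3.4), using that a single common zero of all $p_j$ certainly certifies, for each individual $j$, the existence of the required $p_j$. The only genuine subtlety — hence the step I would flag as the main obstacle — is the real-versus-complex mismatch between Definition 3.4 (which speaks of $a \in \R^n$) and the variety $\V(S)$ (complex points); resolving it amounts to fixing the intended semantics, after which no real calculation remains.
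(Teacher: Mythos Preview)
Your proposal is correct and matches the paper's approach: the paper gives no proof at all, simply labelling the proposition ``an immediate consequence of Definition~3.4,'' which is exactly the definitional unwinding you carry out. Your flagging of the real/complex mismatch is a genuine observation the paper glosses over; since the sentence preceding the proposition deliberately redefines $\V(S)$ as the set of \emph{complex} solutions, your reading~(i) is clearly the intended one, and with that convention the equivalence is indeed tautological.
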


There is an efficient criterion due to the well-known Hilbert Nullestelensatz theorem which determines if  $\V(S) \neq \emptyset$  by Gr\"obner basis: $\V(S) \neq \emptyset$ if and only if the Gr\"obner basis of $\langle S \rangle$ does not contain any constant. Note that there are infinitely many polynomial systems in $\F ( [S])$ for an interval polynomial system $ [S]$ and so it is practical impossible to check  all of them by Nullestelensatz theorem. Nevertheless, we give a finite partition on the set of all polynomial systems arising from $ [S]$ using the concept of comprehensive Gr\"obner system.
\begin{defn}
Let $ [S]=\{[f]_1,\ldots,[f]_\ell\}$ be a system of interval polynomials. We define the ideal family of $ [S]$, denoted by $\idf( [S])$ to be the set 
$$\idf( [S]) = \{\langle p_1,\ldots,p_\ell \rangle \ | \ (p_1,\ldots,p_\ell) \in \F( [S])\}$$
\end{defn}

 \begin{thm}
 \label{Thm:main}
Let $ [S]$ be a system of interval polynomials and $\prec$ be a monomial ordering on $\R[x_1,\ldots,x_n]$. Then 
\begin{itemize}
\item The set of initial ideals $\{in(I) \ | \ I \in \idf( [S])\}$ is a finite set, and
\item For each set of ideals of $\idf( [S])$ with the same initial ideal, there exists a set of parametric polynomials which induces the ideals by different specializations.
\end{itemize}
 \end{thm}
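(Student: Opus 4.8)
The plan is to transfer the problem to the setting of parametric polynomial systems and then read off both assertions from a comprehensive Gr\"obner system. First I would parametrise: writing $[S]=\{[f]_1,\dots,[f]_\ell\}$, replace every interval coefficient occurring in $[S]$ by its own fresh parameter, collect the parameters as ${\bf a}=(a_1,\dots,a_t)$, let $\tilde f_j\in\R[{\bf a}][{\bf x}]$ be the polynomial obtained from $[f]_j$ in this way, and set $\tilde I=\langle\tilde f_1,\dots,\tilde f_\ell\rangle\subseteq\R[{\bf a}][{\bf x}]$. Let $B\subseteq\R^t$ be the box which is the product of the intervals attached to the parameters. With the specialisation homomorphisms $\sigma_\lambda$ of Definition \ref{Def:CGS}, the definitions of $\F([S])$ and $\idf([S])$ give immediately
$$\idf([S])=\{\,\sigma_\lambda(\tilde I)\ :\ \lambda\in B\,\},$$
so everything reduces to controlling, at the points of $B$, the specialisations of the single parametric ideal $\tilde I$.

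Next I would fix a comprehensive Gr\"obner system $\mathcal G(\tilde I)=\{(E_i,N_i,G_i):i=1,\dots,m\}$ for $\tilde I$ with respect to $\prec$ — it exists by [39, Theorem 2.7] — and take the one returned by the {\sc PGB} algorithm of Section \ref{GB}. The point I need is that on each segment the specialised leading monomials, hence the initial ideal, are constant. Indeed, each branch emitted by {\sc PGB-main} is either a trivial branch whose Gr\"obner basis is $\{1\}$ (so $\sigma_\lambda(\tilde I)=\langle 1\rangle$ on it), or a branch of the form $(\,\cdot\,,N',G_m)$ with $G_m$ a minimal Dickson basis (Definition \ref{Def:MDB}) and $N'$ containing $h=\lcm_i\LC_{\bf x}(g_i)$ for $g_i\in G_m$; on the associated locally closed set no $\LC_{\bf x}(g_i)$ vanishes, so by [20, Lemma 32] the set $\sigma_\lambda(G_m)$ is a Gr\"obner basis of $\sigma_\lambda(\tilde I)$ whose leading monomials are exactly $\LM_{\bf x}(G_m)$, independently of $\lambda$. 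Thus $in(\sigma_\lambda(\tilde I))$ equals a fixed monomial ideal $\mathcal I_i$ on the $i$-th segment, and since the $m$ segments cover all parameter values (in particular those of $B$), the set $\{in(I):I\in\idf([S])\}$ is contained in the finite set $\{\mathcal I_1,\dots,\mathcal I_m\}$. This is the first bullet.

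For the second bullet, fix one value $\mathcal I_0$ among $\mathcal I_1,\dots,\mathcal I_m$ and set $T=\{i:\mathcal I_i=\mathcal I_0\}$. For a fixed $i\in T$ the parametric set $G_i$ already induces, by specialisation, every ideal $\sigma_\lambda(\tilde I)$ with $\lambda$ in the $i$-th segment, since $\sigma_\lambda(G_i)$ is then a Gr\"obner basis of that ideal. To produce a single parametric family valid for the whole class I would pass to reduced Gr\"obner bases: on the $i$-th segment the reduced Gr\"obner basis of $\sigma_\lambda(\tilde I)$ has coefficients which are rational functions of ${\bf a}$ with denominators not vanishing there, its leading monomials are precisely the minimal generators ${\bf x}^{\beta_1},\dots,{\bf x}^{\beta_r}$ of $\mathcal I_0$, and its tails are supported on the finitely many standard monomials that can occur across the segments in $T$. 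Introducing a parameter $u_{k,\gamma}$ for each such standard monomial ${\bf x}^\gamma\prec{\bf x}^{\beta_k}$, the parametric set $\{\,{\bf x}^{\beta_k}+\sum_\gamma u_{k,\gamma}{\bf x}^\gamma:k=1,\dots,r\,\}$ induces every member of the class by specialising the $u_{k,\gamma}$ to the (uniquely determined) coefficients of its reduced Gr\"obner basis, which is what the statement claims.

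The step I expect to be the real obstacle is this last one: when several comprehensive-Gr\"obner-system segments carry the same initial ideal, a single formula in the original parameters ${\bf a}$ need not exist, and one is forced through the reduced Gr\"obner basis, whose leading-monomial shape is at least fixed. Pinning this down needs the elementary but genuinely required fact that, since reduction terminates, only finitely many tail monomials appear across the finitely many relevant segments; everything else is bookkeeping on top of the {\sc PGB} machinery recalled in Section \ref{GB}.
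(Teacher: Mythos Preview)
Your argument for the first bullet is essentially the paper's: parametrise the interval coefficients, invoke the existence of a finite comprehensive Gr\"obner system for the resulting parametric ideal, and read off that each specialised ideal has its initial ideal among the finitely many $\langle\LM_{\bf x}(G_i)\rangle$. The paper's own proof does exactly this, only more tersely --- it does not descend into the internals of {\sc PGB} or invoke the non-vanishing of the $\LC_{\bf x}(g_i)$ explicitly, it simply asserts that on each branch of the comprehensive system the leading monomials are fixed. Your extra detail here is sound but not strictly necessary given the definition of a comprehensive Gr\"obner system.

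Where you diverge is the second bullet. The paper treats it as an immediate by-product of the same comprehensive system: the $G_i$ themselves are the parametric polynomial sets, and the proof ends after the first bullet with ``this finishes the proof''. You correctly notice that several segments $(E_i,N_i,G_i)$ may share the same initial ideal, and that in the original parameters ${\bf a}$ no single $G_i$ need sweep out the whole equivalence class; your remedy --- pass to reduced Gr\"obner bases and introduce fresh parameters $u_{k,\gamma}$ for the tail coefficients --- is a clean way to obtain one parametric family per initial ideal, and it works because the reduced Gr\"obner basis is uniquely determined by the ideal and has a fixed monomial support once the initial ideal is fixed. This is more than the paper actually proves: the paper is content to exhibit parametric families segment by segment and does not address (or perhaps does not intend) the merging of segments with equal initial ideal. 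So your route is genuinely different for the second assertion, and it buys a sharper conclusion at the cost of an extra construction; the paper's route is shorter but leaves the multi-segment case unspoken.
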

\begin{proof}
 To prove this theorem, we use the concept of comprehensive Gr\"obner system. Suppose that $S^* $ is obtained by replacing each interval coefficient by a parameter. Note that  if an interval appears in $t\geq 1$ coefficients, then we assign $t$ distinct parameters to it. It is easy to check that each element of $\idf( [S])$ is the image of $S^*$ under a suitable specialization. On the other hand by [39, Theorem 2.7] $S^*$ has a finite comprehensive Gr\"obner system as $\mathcal{G} = \{(E_1,N_1,G_1), \ldots, (E_k,N_k,G_k)\}$, where for each specialization $\sigma$ there exists a $1\leq j \leq k$ such that $\LM(\sigma(S^*)) = \LM(G_i)$. It is worth noting that although there is a finite number of branches in $\mathcal{G}$, we can also remove those specializations with complex values, and also those with values out of the assigned interval. Thus for each $I \in \idf( [S])$ there exists an $1\leq i \leq k$ with $in(I) = \langle\LM(G_i)\rangle$ and this finishes the proof.
 \end{proof}
What is explained in the proof of Theorem 6.3 yields to extend the concept of comprehensive Gr\"obner system for interval polynomials. 
\begin{defn}
\label{Def:IGS}
Let $ [S]\subset [\R][x_1,\ldots,x_n]$ be a system of interval polynomials with $t$ interval coefficients, and $\prec$ be a monomial ordering on $\R[x_1,\ldots,x_n]$. Let also that $\mathcal{G} = \{(E_1,N_1,G_1), \ldots, (E_k,N_k,G_k)\}$ be a set of triples 
$$(E_i,N_i,G_i) \in \R[h_1,\ldots,h_t] \times \R[h_1,\ldots,h_t]\times \R[h_1,\ldots,h_t][x_1,\ldots,x_n] $$
where $\{h_1,\ldots,h_t\}$ is the set of parameters assigned to each interval coefficient. Then we call $\mathcal{G}$ an interval Gr\"obner system for $ [S]$ denoted by $\mathcal{G}_{\prec}( [S])$ if for each $t-$tuple $(a_1,\ldots,a_t)$ of the inner values of interval coefficients there exists an $1\leq i \leq k$ such that:
\begin{itemize}
\item 
For each  $p \in E_i$, $p(a_1,\ldots,a_t) = 0$,
\item 
There exist some $q \in N_i$ such that $q(a_1,\ldots,a_t) \neq0,$ and
\item
$\sigma(G_i)$ is a Gr\"obner basis for $\langle \sigma( [S]) \rangle$ with respect to $\prec$, where $\sigma$ is the specialization $h_j \mapsto a_j$ for $j=1,\ldots,t$.
\end{itemize} 
\end{defn}
\begin{thm}
Each interval polynomial system possesses an interval Gr\"obner system.
\end{thm}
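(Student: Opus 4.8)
The plan is to treat this statement as essentially a corollary of Theorem 6.3, or rather of the construction carried out in its proof, combined with the cited existence result [39, Theorem 2.7] for comprehensive Gr\"obner systems of parametric ideals. First I would set up the parametrization exactly as in the proof of Theorem 6.3: given an interval polynomial system $[S]=\{[f]_1,\ldots,[f]_\ell\}$ in which interval coefficients occur $t$ times in total, replace the coefficient at each such occurrence by a fresh parameter $h_j$ (so that two occurrences of the same numerical interval receive two \emph{distinct} parameters), obtaining a parametric system $S^*\subset \R[h_1,\ldots,h_t][x_1,\ldots,x_n]$. The elementary observation one records here is that, letting $I_j$ denote the interval attached to $h_j$, the assignment $(a_1,\ldots,a_t)\mapsto \sigma_{(a_1,\ldots,a_t)}(S^*)$ with $a_j\in I_j$ surjects onto $\F([S])$, and that $\sigma_{(a_1,\ldots,a_t)}(\langle S^*\rangle)=\langle\sigma_{(a_1,\ldots,a_t)}(S^*)\rangle$ is precisely the member of $\idf([S])$ determined by that tuple.

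Next I would invoke [39, Theorem 2.7] to obtain a comprehensive Gr\"obner system $\mathcal{G}=\{(E_1,N_1,G_1),\ldots,(E_k,N_k,G_k)\}$ for the parametric ideal $\langle S^*\rangle$ with respect to $\prec$, each triple lying in $\R[h_1,\ldots,h_t]\times\R[h_1,\ldots,h_t]\times\R[h_1,\ldots,h_t][x_1,\ldots,x_n]$ since the parameter set is exactly $\{h_1,\ldots,h_t\}$. Then I would verify that $\mathcal{G}$ already satisfies Definition 6.4, i.e. that it is an interval Gr\"obner system $\mathcal{G}_{\prec}([S])$. Fix any $t$-tuple $(a_1,\ldots,a_t)$ of inner values, $a_j\in I_j$. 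Viewed as a point of $\overline{\R}^t$, the defining property of a comprehensive Gr\"obner system supplies an index $i$ with $(a_1,\ldots,a_t)\in\V(E_i)\setminus\V(N_i)$ — that is, $p(a_1,\ldots,a_t)=0$ for every $p\in E_i$ and $q(a_1,\ldots,a_t)\neq 0$ for some $q\in N_i$ — and with $\sigma_{(a_1,\ldots,a_t)}(G_i)$ a Gr\"obner basis of $\sigma_{(a_1,\ldots,a_t)}(\langle S^*\rangle)=\langle\sigma_{(a_1,\ldots,a_t)}([S])\rangle$ w.r.t. $\prec$; since the $a_j$ are real, this specialized basis lies in $\R[x_1,\ldots,x_n]$. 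These are exactly the three bullet points of Definition 6.4, so $\mathcal{G}$ is an interval Gr\"obner system for $[S]$. If desired, one may prune $\mathcal{G}$ by discarding any triple $(E_i,N_i,G_i)$ for which $\V(E_i)\setminus\V(N_i)$ contains no admissible tuple of inner values; the surviving family remains an interval Gr\"obner system, but this refinement is not needed for existence.

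There is no genuinely hard step: the theorem is a repackaging of the proof of Theorem 6.3 through the lens of Definition 6.4. The only points meriting a word of care are (i) that distinct occurrences of the same interval must receive distinct parameters, so that \emph{every} element of $\F([S])$ is realised by some specialization — this is the remark already made inside the proof of Theorem 6.3 — and (ii) that a comprehensive Gr\"obner system is a priori a covering of $\overline{\R}^{\,t}$, whereas an interval Gr\"obner system only has to cover the real sub-box $\prod_{j=1}^t I_j$; since a covering of $\overline{\R}^{\,t}$ certainly covers this subset, nothing is lost. The substantive content is thus entirely delegated to [39, Theorem 2.7] and to Theorem 6.3.
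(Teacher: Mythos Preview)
Your proposal is correct and follows essentially the same approach as the paper: pass to the associated parametric system $S^*$, invoke [39, Theorem~2.7] to obtain a comprehensive Gr\"obner system, and observe that this already fulfils Definition~6.4 (with optional pruning of redundant triples). The paper's own proof is in fact a terser version of exactly this argument; your write-up is more detailed in checking the three bullet points and in flagging the subtleties (i) and (ii), but there is no substantive difference in method.
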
 
\begin{proof}
Let $S^*$ be the parametric polynomial system obtained by assigning each interval coefficient to a parameter. As mentioned in the proof of Theorem 6.3, $\mathcal{G}_{\prec}( [S])$ is the same comprehensive Gr\"obner system of $S^*$ where each parameter is bounded to give values from its assigned ideal. On the other hand it is proved that each system of parametric polynomials has a comprehensive Gr\"obner system, which terminates the proof.
\end{proof}
We give now an easy example to illustrate what described above.
\begin{exmp}
\label{IGSexample}
Consider the interval polynomial system 
\begin{eqnarray}
\label{IGSexamplesystem} 
[S] = \left\{
\begin{array}{ccc}
[-1, 2)xy+[0, 1)y+[3, 5)&=&0\\
~ [-3, 1)xy^2+[1, 3)y&=&0 
 \end{array}
 \right.
 \end{eqnarray} 
To obtain a parametric polynomial system, we assign the intervals $[-1,2), [0,1), [3,5), [-3,1)$ and $[1,3)$ by $h_1,\ldots,h_5$ respectively. Then we observe the parametric polynomial system
$$S^* = \{h_1 x y+h_2 y+h_3, h_4 x y^2 + h_5y \} \subset \R[h_1,\ldots,h_5][x,y]$$
Using the lexicographic monomial ordering $y\prec x$ we can compute a comprehensive Gr\"obner system for $\langle S^* \rangle$ which contains about $19$ triples. However some of them are admissible only for some values of parameters {\it out} of their assigned interval. For instance the triple $(\{1\}, \{h_1,h_2,h_4,h_5\}, \{h_3\})$ is not acceptable in this example, since $h_5 \in [1,3]$ and so it can not be zero. By removing such triples, there remains only $8$ one shown in the following table. Therefore the following table shows $\mathcal{G}_\prec( [S])$.
\begin{table}[H]
 \centering {\small
 \begin{tabular}{|c||c||c|}
 \cline{1-3}
$E_i$&$N_i$&$G_i$\\
 \cline{1-3}
$\{h_3h_5\}$&$\{h_1, h_2, h_4 \}$& $\{1\}$\\
 \cline{1-3}
$\{h_3 h_4\}$&$\{h_1, h_2 \}$&$ \{1\}$\\
\cline{1-3}
 $\{h_1 h_2 h_3 h_4\}$&$\{h_3 h_4-h_1 h_5 \}$&$\{ 1\}$\\
\cline{1-3}
$\{h_2 h_3 h_5\}$&$\{h_1, h_4\}$&$\{1\}$\\
\cline{1-3}
$\{h_1 h_3 h_5\}$&$\{h_4 \}$& $\{1\}$\\
\cline{1-3}
 $\{h_2 h_3 h_4\}$&$\{h_1\}$&$\{h_2 y+h_3, h_3^2 h_4 x-h_2 h_5 h_3 \}$\\
\cline{1-3}
 $\{h_1 h_4\}$&$\{h_2, h_3 h_4-h_1 h_5 \}$&$\{h_3+h_1 x y \}$\\
\cline{1-3}
&&$\{h_1 x y+h_2 y+h_3, -h_4 h_2 y^2-h_4  h_3 y+h_5 h_1 y, $\\
$\{h_1 h_2 h_3 h_4( h_3 h_4-h_1 h_5)\}$ &$\{ \}$ & $-h_1 h_3^2 h_4 x+ h_1^2 h_5 h_3 x+h_2^2 h_4 h_3 y+h_2 h_4 h_3^2,$\\
 &&$-h_4  h_2 h_3 y -h_3^2 h_4+h_1 h_5 h_3\}$\\
 \cline{1-3}
 \end{tabular}
 }
\caption{Interval Gr\"obner system of System (5)} 
 \label{IGSexampletable}
 \end{table}
\end{exmp}

\subsection{Computing Interval Gr\"obner Systems}
\label{Sec3:IGS}
In tis section we state our algorithm so called {\sc IGS} to compute interval Gr\"obner system for an interval polynomial system. This algorithm is based on the {\sc PGB} algorithm with some extra conditions for the definition of consistency.
To begin let $ [S]=\{[f]_1,\ldots,[f]_\ell\}\subset [\mathbf{R}][x_1,\ldots,x_n]$ be a system of interval polynomials, where for each $1\leq j\leq \ell$, 
$$[f]_j=\sum_{i=1}^{m_j} [a_{ij},b_{ij}] x_1^{\alpha_{ij,1}}\cdots x_n^{\alpha_{ij,n}}$$
and $(\alpha_{ij,1},\ldots,\alpha_{ij,n})\in \mathbf{Z}^n_{\geq 0}$, for each $i$. 
As it is mentioned in Theorem 6.3, we assign to each interval coefficient $[a_{ij}, b_{ij}]$ a parameter $h_{ij}$ to convert $ [S]$ to a parametric polynomial system $S^*$.
The following proposition describes the relations between comprehensive Gr\"obner systems of $S^*$ and interval Gr\"obner bases of $ [S]$.
\begin{prop}
\label{Prop:1}
Using the above notations, let $[\mathcal{G}]$ and $\mathcal{G}$ be an interval Gr\"obner basis for $ [S]$ and a comprehensive Gr\"obner basis for $S^*$ respectively w.r.t. the same monomial ordering. Then for each $(E,N,G) \in [\mathcal{G}]$, there exists $(E',N',G') \in \mathcal{G}$ such that ${\bf V}(E)\setminus {\bf V}(N) \subset {\bf V}(E')\setminus {\bf V}(N')$ and $G,G'$ have the same initial ideal.
\end{prop}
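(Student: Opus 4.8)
The plan is to read off the statement directly from the construction of the interval Gröbner system in Definition 6.4 and Theorem 6.5. Recall that $[\mathcal{G}]$ is, by construction, nothing but a comprehensive Gröbner system $\mathcal{G}$ of the parametric system $S^*$ in which we have \emph{discarded} exactly those triples whose admissible set contains no point all of whose coordinates lie in the assigned intervals (and are real). So the natural approach is: given $(E,N,G)\in[\mathcal{G}]$, exhibit the corresponding surviving triple of $\mathcal{G}$ it came from.

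First I would fix $(E,N,G)\in[\mathcal{G}]$ and pick a concrete $t$-tuple of inner values $(a_1,\dots,a_t)$, i.e. a point of $\prod_j (\,\text{the $j$-th interval}\,)$, that is admissible for this triple; such a tuple exists because inconsistent (empty-admissible-set) triples are never placed in $[\mathcal{G}]$. By the defining property of $[\mathcal{G}]$ (Definition 6.4), the specialization $\sigma:h_j\mapsto a_j$ sends $p(a_1,\dots,a_t)=0$ for all $p\in E$, sends some $q\in N$ to a nonzero value, and $\sigma(G)$ is a Gröbner basis of $\langle\sigma([S])\rangle$ with respect to $\prec$. Now view the same point $(a_1,\dots,a_t)\in\overline{K}^t$ through the comprehensive Gröbner system $\mathcal{G}$ of $S^*$: by Definition 5.5 there is an index $i$ with $(a_1,\dots,a_t)\in{\bf V}(E'_i)\setminus{\bf V}(N'_i)$ and $\sigma(G'_i)$ a Gröbner basis of $\sigma(\langle S^*\rangle)=\langle\sigma([S])\rangle$. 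Set $(E',N',G'):=(E'_i,N'_i,G'_i)$.

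Next I would argue the two required properties. For the inclusion ${\bf V}(E)\setminus{\bf V}(N)\subset{\bf V}(E')\setminus{\bf V}(N')$, the key point is that both $[\mathcal{G}]$ and $\mathcal{G}$ are produced by the same {\sc PGB}-type recursion, so the admissible region of the triple in $[\mathcal{G}]$ is literally a subregion of the admissible region of the parent triple in $\mathcal{G}$ (the extra conditions {\sc IGS} imposes only shrink regions, never enlarge them); thus one should be able to take $(E',N',G')$ with ${\bf V}(E)\setminus{\bf V}(N)\subseteq{\bf V}(E')\setminus{\bf V}(N')$ on the nose, not merely for the single point $(a_1,\dots,a_t)$. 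For the equality of initial ideals: both $\sigma(G)$ and $\sigma(G')$ are reduced Gröbner bases of the same ideal $\langle\sigma([S])\rangle$ with respect to $\prec$, hence $\langle\LM(\sigma(G))\rangle=in(\langle\sigma([S])\rangle)=\langle\LM(\sigma(G'))\rangle$; and since $\LM_{\bf x}$ of a parametric polynomial is preserved under any specialization that does not kill its leading coefficient — which is exactly what the non-null conditions $N,N'$ guarantee — we get $\langle\LM_{\bf x}(G)\rangle=\langle\LM_{\bf x}(G')\rangle$, i.e. $G$ and $G'$ have the same initial ideal.

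The main obstacle, I expect, is the first bullet: making the inclusion ${\bf V}(E)\setminus{\bf V}(N)\subset{\bf V}(E')\setminus{\bf V}(N')$ hold globally rather than just at the chosen sample point $(a_1,\dots,a_t)$. This requires tracking how the {\sc IGS} algorithm refines the branch structure of {\sc PGB} — in particular verifying that every branch it keeps is obtained from a {\sc PGB}-branch by adjoining further conditions to $N$ (tightening) or to $E$ (restricting), so that its admissible set embeds into the {\sc PGB}-branch's admissible set. If one is content with a pointwise statement, the sampling argument above already suffices, since $[\mathcal{G}]$ covers every admissible inner-value tuple and $\mathcal{G}$ covers all of $\overline{K}^t$; one then upgrades to the set-theoretic inclusion by noting that distinct admissible tuples for the fixed $(E,N,G)$ can only meet finitely many {\sc PGB}-branches and a standard irreducibility/Zariski-density argument pins them to a single one.
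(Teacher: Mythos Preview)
The paper's own proof of this proposition is a single sentence: ``This comes from Definitions 6.4 and 5.5.'' So your proposal is far more detailed than what the authors offer; you are in effect trying to supply the argument they omit.

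Your overall strategy---pick an admissible point in the triple's region, locate the corresponding CGS cell via Definition~5.5, and compare initial ideals through a common specialization---is the natural unpacking of the paper's one-liner, and the initial-ideal part is fine. However, there is a genuine issue with your framing. You repeatedly assume that $[\mathcal{G}]$ is \emph{obtained from} $\mathcal{G}$ by the {\sc IGS} construction (``the corresponding surviving triple of $\mathcal{G}$ it came from'', ``both $[\mathcal{G}]$ and $\mathcal{G}$ are produced by the same {\sc PGB}-type recursion''). But the proposition, as stated, takes $[\mathcal{G}]$ and $\mathcal{G}$ to be \emph{arbitrary} interval and comprehensive Gr\"obner systems for $[S]$ and $S^*$ respectively---they need not be algorithmically related at all. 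Under that reading your inclusion argument collapses: an arbitrary interval Gr\"obner system could have a triple whose admissible region straddles several cells of an unrelated CGS, and then no single $(E',N',G')$ contains it. Your final paragraph's ``Zariski-density/irreducibility'' patch does not rescue this, since ${\bf V}(E)\setminus{\bf V}(N)$ need not be irreducible.

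In short: your argument works if one reads the proposition (as the authors seem to intend, given Theorem~6.5 and the subsequent algorithm) as comparing an interval Gr\"obner system \emph{derived from} a CGS with that CGS; then the inclusion is immediate because {\sc IGS} only discards triples and never alters surviving ones. But you should state that hypothesis explicitly rather than smuggle it in, and drop the sampling/irreducibility detour, which is both unnecessary under that hypothesis and insufficient without it.
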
 
\begin{proof}
This comes from Definitions 6.4 and 5.5.
\end{proof}
According to the above proposition, to compute an interval Gr\"obner basis for $ [S]$, it is enough to compute a comprehensive Gr\"obner basis for $S^*$, and use a criterion to omit 
those triples $(E,N,G)$ lying in $\mathcal{G} \setminus [\mathcal{G}]$, we call them {\it redundant} triples.
\begin{rem}
\label{Rem:1}
Note that for a triple $(E,N,G)$ in $\mathcal{G} \setminus [\mathcal{G}]$, the intersection of ${\bf V}(E)\setminus {\bf V}(N) $ with the cartesian product of interval coefficients is empty.
\end{rem}
We are now going to present a criterion to determine the elements of $\mathcal{G} \setminus [\mathcal{G}]$. This criterion of course is based on the answer of this question:

~

\noindent{\it How can we sure that a system of polynomials $E\subset \mathbf{R}[a_1,\ldots,a_t]$ has a real root in the interval $[\alpha_1,\beta_1)\times \cdots\times [\alpha_t,\beta_t)$?}

~

In the case for which $\langle E \rangle$ is zero dimensional, this question is answered totally thanks to some efficient computational tools like Sturm's chain by isolating the real roots.
However in the case of positive dimensional, there exist only some algorithm to isolate the real roots.
Among them there exists an algorithm which determines whether a multivariate polynomial system has real root or not. 

Because of the reasons declared above, we convert the above key question to the problem of determining whether a polynomial system has a real root or not. 
\begin{thm}
\label{Thm:Crit}
Let  $E\subset \mathbf{R}[a_1,\ldots,a_t]$ be a finite polynomial set. Let also
$$F=E \cup \{a_i+(a_i-\beta_i)b_i^2-\alpha_i\ | \ i=1,\ldots,t\} \subset \mathbf{R}[a_1,\ldots,a_t,b_1,\ldots,b_t]$$
where $b_j$'s are algebraic independent by $a_i$'s and $[\alpha_i,\beta_i)$ be a real interval for each $i=1,\ldots,t$.
 Then the system $E=0$ has a solution in $[\alpha_1,\beta_1)\times \cdots\times [\alpha_t,\beta_t)$ if and only if 
the system $F=0$ has a real solution.
\end{thm}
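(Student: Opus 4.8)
The plan is to reduce the whole statement to a one‑variable observation about each auxiliary polynomial
$g_i(a_i,b_i) := a_i+(a_i-\beta_i)b_i^2-\alpha_i$, and then assemble the two directions coordinate by coordinate. The central claim I would isolate and prove first is: for a fixed real number $a_i$, there exists $b_i\in\mathbb{R}$ with $g_i(a_i,b_i)=0$ if and only if $a_i\in[\alpha_i,\beta_i)$. Here I assume $\alpha_i<\beta_i$; when $\alpha_i=\beta_i$ the interval is empty and $g_i$ never vanishes, so the claim holds vacuously.

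To prove this claim I would rewrite $g_i=0$ as $a_i-\alpha_i=(\beta_i-a_i)b_i^2$. If $a_i=\beta_i$ the right-hand side is $0$ while the left-hand side is $\beta_i-\alpha_i\neq 0$, so there is no solution, consistently with $a_i\notin[\alpha_i,\beta_i)$. If $a_i\neq\beta_i$, the equation is equivalent to $b_i^2=\frac{a_i-\alpha_i}{\beta_i-a_i}$, which has a real solution exactly when the right-hand fraction is nonnegative. A short sign analysis of the numerator $a_i-\alpha_i$ and the denominator $\beta_i-a_i$ shows this happens precisely when $\alpha_i\le a_i<\beta_i$: for $a_i<\beta_i$ the denominator is positive, so one needs $a_i\ge\alpha_i$ (and $a_i=\alpha_i$ is attained by $b_i=0$); for $a_i>\beta_i$ the denominator is negative while the numerator is positive, so the fraction is negative and no real $b_i$ exists. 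This settles the claim, including the correct closed/open behaviour at the two endpoints.

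With the claim in hand both implications follow at once. For the forward direction, given $(a_1^*,\dots,a_t^*)\in[\alpha_1,\beta_1)\times\cdots\times[\alpha_t,\beta_t)$ with every polynomial of $E$ vanishing at $(a_1^*,\dots,a_t^*)$, I pick for each $i$ a real $b_i^*$ with $g_i(a_i^*,b_i^*)=0$; then $(a^*,b^*)$ is a real zero of every polynomial of $F$, since the elements of $E$ do not involve the variables $b_i$. Conversely, a real zero $(a^*,b^*)$ of $F$ in particular annihilates $E$, so $E=0$ at $a^*$, and annihilates each $g_i$, which by the claim forces $a_i^*\in[\alpha_i,\beta_i)$ for every $i$; hence $a^*$ is a solution of $E=0$ lying in the required box.

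I do not expect a serious obstacle. The only points needing care are the boundary cases, namely verifying that $a_i=\alpha_i$ is genuinely attained and $a_i=\beta_i$ genuinely excluded, together with the degenerate case $\alpha_i=\beta_i$, and being explicit that the $b_i$ are introduced as fresh variables algebraically independent from the $a_j$, so that adjoining the $g_i$ to $E$ restricts the $a$-part of the solution set in exactly the intended way and in no other.
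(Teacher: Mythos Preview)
Your proposal is correct and follows essentially the same approach as the paper: both arguments hinge on the elementary observation that $g_i(a_i,b_i)=0$ is equivalent, for $a_i\neq\beta_i$, to $b_i^2=\dfrac{a_i-\alpha_i}{\beta_i-a_i}$, and then read off the range of admissible $a_i$ from the sign of this quotient (the paper writes the forward direction via the explicit formula $\eta_i=\sqrt{(\alpha_i-\gamma_i)/(\gamma_i-\beta_i)}$ and the converse via $\gamma_i=(\beta_i-\alpha_i)\,\eta_i^2/(1+\eta_i^2)+\alpha_i$). Your version is organized a little differently, isolating the one-variable equivalence first, and is in fact slightly more careful than the paper about the endpoint $a_i=\beta_i$ and the degenerate case $\alpha_i=\beta_i$.
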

\begin{proof}
Let $E=0$ has a solution $(\gamma_1,\ldots,\gamma_t)\in [\alpha_1,\beta_1)\times \cdots\times [\alpha_t,\beta_t)$. 
Let also
$$\eta_i=\sqrt{\frac{\alpha_i-\gamma_i}{\gamma_i-\beta_i}}$$
for each $i=1,\ldots,t$. It is easy to see that 
$$\gamma_i+(\gamma_i-\beta_i)\eta_i^2-\alpha_i = 0$$
which implies that 
$(\gamma_1,\ldots,\gamma_t,\eta_1,\ldots,\eta_t)$
is a solution of $F=0$.

Conversely, suppose that there exists $(\gamma_1,\ldots,\gamma_t,\eta_1,\ldots,\eta_t)\in \mathbf{R}^{2t}$ which is a solution of $F=0$, i. e. $f(\gamma_1,\ldots,\gamma_t)=0$ for each $f\in F$ and $\gamma_i+(\gamma_i-\beta_i)\eta_i^2-\alpha_i = 0$, for each $i=1\ldots,t$. It is enough to show that $\gamma_i \in [\alpha_i,\beta_i)$. In doing so, we see that 
$$\gamma_i = \frac{\alpha_i+\beta_i \eta_i^2}{1+\eta_i^2} = (\beta_i-\alpha_i)\frac{\eta_i^2}{1+\eta_i^2}+\alpha_i.$$
Indeed, $0\leq\frac{\eta_i^2}{1+\eta_i^2}<1$ and this shows that $\alpha_i\leq \underbrace{(\beta_i-\alpha_i)\frac{\eta_i^2}{1+\eta_i^2}+\alpha_i}_{\gamma_{_i}} <\beta_i$ which finishes the proof.

\end{proof}
\begin{rem}
\label{aux}
Note that for the intervals $[\alpha, \infty)$ and $(-\infty, \beta]$ we can use the auxiliary  polynomials $a-\alpha-b^2$ and $a-\beta+b^2$ respectively.
\end{rem}
Using Theorem 6.9 and the Remarks 6.8, 6.10 we can determine the elements of $\mathcal{G} \setminus [\mathcal{G}]$ exactly (see the notations of Proposition 6.7).
\begin{cor}
Let $(E,N,G) \in \mathcal{G}$ and $[\alpha_1,\beta_1),\ldots,[\alpha_t,\beta_t)$ be $t$ real intervals. Then 
$(E,N,G)$ is redundant if and only if the system $F=0$ has no real roots, where 
$$F=E \cup \{a_i+(a_i-\beta_i)b_i^2-\alpha_i\ | \ i=1,\ldots,t\} \cup \{\prod_{g\in N}(c_g g-1)\}$$
\noindent$ \subset \mathbf{R}[a_1,\ldots,a_t,b_1,\ldots,b_t,c_g: g\in N].$
\end{cor}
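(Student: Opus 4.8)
The plan is to combine Theorem 6.9 with the standard Rabinowitsch trick for encoding the non-null conditions, and then read off the statement directly from Remark 6.5. Recall that by Proposition 6.7 and Remark 6.5, the triple $(E,N,G)\in\mathcal{G}$ is redundant precisely when $\big({\bf V}(E)\setminus{\bf V}(N)\big)$ has empty intersection with the cartesian product $[\alpha_1,\beta_1)\times\cdots\times[\alpha_t,\beta_t)$ of the interval coefficients. So the whole task is to re-express the condition ``${\bf V}(E)\setminus{\bf V}(N)$ meets $[\alpha_1,\beta_1)\times\cdots\times[\alpha_t,\beta_t)$'' as the real solvability of a single polynomial system over $\mathbf{R}$.

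First I would handle the ``$\setminus{\bf V}(N)$'' part. A point $(a_1,\ldots,a_t)$ lies in ${\bf V}(E)\setminus{\bf V}(N)$ iff every $g'\in E$ vanishes at it and some $g\in N$ does not vanish at it. The condition ``some $g\in N$ is nonzero'' is equivalent, via the Rabinowitsch device, to the real solvability (in a fresh variable $c_g$ for each $g\in N$) of the single equation $\prod_{g\in N}(c_g g-1)=0$ together with $E=0$: indeed if $g(a)\neq 0$ for some particular $g$, pick $c_g=1/g(a)$ so that $c_gg-1=0$ and the product vanishes regardless of the other $c_{g'}$; conversely if the product vanishes then some factor $c_gg-1$ is zero, forcing $g(a)\neq 0$. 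Thus ``$(a_1,\ldots,a_t)\in {\bf V}(E)\setminus{\bf V}(N)$'' is equivalent to the real solvability of
$$
E\cup\Big\{\prod_{g\in N}(c_g g-1)\Big\}\subset \mathbf{R}[a_1,\ldots,a_t,c_g:g\in N]
$$
in the $a$'s (with the $c_g$'s existentially quantified). One small caveat I would note: if $N=\varnothing$ the empty product is $1$, the equation $1=0$ is unsolvable, and ${\bf V}(E)\setminus{\bf V}(\varnothing)=\varnothing$ as well, so the equivalence still holds; if $N=\{0\}$ both sides are empty too. (In the examples $N$ always behaves, so this is only a remark.)

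Next I would glue in the interval constraints exactly as in Theorem 6.9: augmenting the system by the auxiliary polynomials $a_i+(a_i-\beta_i)b_i^2-\alpha_i$ for $i=1,\ldots,t$ (with fresh $b_i$) converts ``$a_i\in[\alpha_i,\beta_i)$'' into a pure real-solvability statement, by the computation $\gamma_i=(\beta_i-\alpha_i)\frac{\eta_i^2}{1+\eta_i^2}+\alpha_i$ carried out in the proof of that theorem (and Remarks 6.8, 6.10 cover the one-sided-infinite cases). Since the $b_i$'s and the $c_g$'s are algebraically independent of the $a_i$'s and of each other, the three bundles of equations can be imposed simultaneously without interference, and the combined system $F=0$ is real-solvable iff there is a point of ${\bf V}(E)\setminus{\bf V}(N)$ lying in $[\alpha_1,\beta_1)\times\cdots\times[\alpha_t,\beta_t)$. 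By Remark 6.5 the negation of this is exactly redundancy of $(E,N,G)$, which is the claim.

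The only genuinely delicate point is the interaction of the Rabinowitsch factors with the interval-encoding factors: one must be sure that existentially quantifying over the $c_g$'s does not secretly enlarge the set of admissible $a$'s beyond ${\bf V}(E)\setminus{\bf V}(N)$, and dually that it does not shrink it — this is precisely the two-directional argument sketched above, and it goes through because each $c_g$ appears only linearly in its own factor and in no interval equation. Everything else is a bookkeeping reassembly of Theorem 6.9 and Proposition 6.7, so I expect no further obstacle.
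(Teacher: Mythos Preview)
Your proposal is correct and follows essentially the same approach as the paper: invoke Theorem~6.9 for the interval constraints and the Rabinowitsch trick $\prod_{g\in N}(c_g g-1)=0$ to encode the non-null condition ``some $g\in N$ is nonzero''. Your write-up is more detailed than the paper's one-line proof (and handles the edge cases $N=\varnothing$, $N=\{0\}$ explicitly), but the argument is the same; note only that your cross-references to ``Remark~6.5'' and ``Remarks~6.8,~6.10'' are slightly misnumbered---the redundancy characterization is Remark~6.8 and the unbounded-interval case is Remark~6.10.
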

\begin{proof}
The proof comes from Theorem 6.9 and this fact that if $\prod_{g\in N}(c_g g-1)=0$ then there exists a $g\in N$ for which $c_g g-1 = 0$ which implies that $g\ne 0$.
\end{proof}
The above corollary is the criterion which determines all redundant triples, and so tackling this criterion with {\sc PGB} algorithm we can design our new algorithm to compute interval Gr\"obner systems. 
We design now the {\sc IGS} algorithm by its main procedure.
\begin{algorithm}[H]
\caption{{\sc IGS}}
\begin{algorithmic}
\Procedure{{\sc IGS}}{$S_{[\ ]}, \prec_{\bf x}$}
    \State {Assign $a_1,\ldots,a_t$ to interval coefficients and name it $S^*$};
    \State {$\prec_{\bf a}:=$ an arbitrary monomial ordering on $a_1,\ldots,a_t$;}
    \State {$E,N:=\{\ \},\{1\};$}
    \State {$\prec_{{\bf x}, {\bf a}}:=$The block ordering of $\prec_{\bf x}, \prec_{\bf a}$}
    \State {{\bf Return} {\sc PGB-main}$(P,E,N,\prec_{{\bf x}, {\bf a}}, L); \ \ \backslash\backslash L$ is the ordered set of interval coefficients which is needed to check consistency}
\EndProcedure
\end{algorithmic}
\end{algorithm}
The {\sc PGB-main} algorithm is the same which which is used in {\sc PGB} algorithm. We only change the definition of consistency as below. 
\begin{defn}
Let $[\alpha_1,\beta_1),\ldots,[\alpha_t,\beta_t)$ be $t$ real intervals and $E,N\subset \mathbf{R}[a_1,\ldots,a_t]$. The pair $(E,N)$ is called consistent if it is not redundant, or equivalently,
 $$[{\bf V}(E) \setminus {\bf V}(N)] \cap [\alpha_1,\beta_1)\times \cdots \times [\alpha_t,\beta_t)\neq \emptyset.$$
\end{defn} 
According to the above definition, we change the {\sc IsConsistent} algorithm to {\sc Interval-IsConsistent}, which checks the consistency for radical membership and redundancy determination.  

\begin{algorithm}[H]
\label{Alg:IsCo}
\caption{{\sc Interval-IsConsistent}}
\begin{algorithmic}
\Procedure{{\sc Interval-IsConsistent}}{$E,N, [\alpha_1,\beta_1),\ldots,[\alpha_t,\beta_t)$}
    \State {$test:=$false;}
    \State {$flag:=$false;}
    \If {$(E,N)$ is not redundant}
    	  \State {$test:=$true;}
    \EndIf
    \If{test}
	    \State {$flag:=$false;}
    	\For{$g\in N$ {\bf while} $flag=$false}
    		\If{$g\notin \sqrt{\langle E\rangle}$}
    			\State{$flag:=$true;}
    		\EndIf
    	\EndFor
    \EndIf
    \State {{\bf Return} $flag$};
\EndProcedure
\end{algorithmic}
\end{algorithm}
\begin{rem}
It is worth noting that redundant triples will be omitted before the algorithm goes to continue with them. This property causes that {\sc IGS} returns less triples than {\sc PGB}. 
\end{rem}

\section{Examples}
\label{Examples}
\subsection{Solving Fuzzy Polynomial Systems}
\label{Sec4:App}
In this section we state the ability of interval Gr\"obner system to solve parametric polynomial systems for which the parameters range over specific intervals, which may appear, for instance, when analysing fuzzy polynomial systems. 
In doing so, we can use the same {\sc IGS} algorithm, after converting an interval polynomial system to a parametric one. To give an example, we point at resolution of fuzzy polynomial systems which converts to solve a parametric polynomial system with parameters range over $[0,1]$ (see [1]). 

\begin{exmp}([1, Example 4.3] resolved)
Consider the following system of fuzzy polynomials:
$$
\left\{
\begin{array}{lll}
(0.25, 0.75, 0.625)&=& (0, 0.5, 0,5)x^5  + (0.5, 0.5, 0.25)y\\
(1.25, 1.25, 0.75)&=&(0.5,1,0.5)x^4+(1.5, 0.5, 0.5)xy\\
\end{array}
\right.
$$
To solve this system, the general way is to decompose  it  into four parametric polynomial systems. Here we solve only the third and fourth systems.
Consider the third system as follows:
$$
(3): \left\{
\begin{array}{lll}
f_1&=& (\frac{1}{2}-\frac{1}{2}h)x^5+(\frac{1}{2}h)y+\frac{1}{2}-\frac{3}{4}h\\
f_2&=& (-\frac{1}{2}+\frac{1}{2}h)x^5+(\frac{3}{4}-\frac{1}{4}h)y-\frac{7}{8}+\frac{5}{8}h\\
f_3&=& (-\frac{1}{2}+h)x^4+(2-\frac{1}{2}h)xy-\frac{5}{4}h\\
f_4&=& (1-\frac{1}{2}h)x^4+(1+\frac{1}{2}h)xy-2+\frac{3}{4}h\\
\end{array}
\right.
$$
 Computing an interval Gr\"obner basis w.r.t. $y\prec_{lex} x$, we have only a triple $(\{\ \}, \{h-1\}, \{1\})$ which means that we have no solutions whenever $h\ne 1$. As our algorithm deals with the values of $h\in [0,1)$, we must consider the case $h=1$ seperately. In this case, we find a solution $(h=1, x= -1.473157368, y=0.5)$ regarding the sign of variables.

The fourth system is:
$$
(4): \left\{
\begin{array}{lll}
f_1&=& (\frac{1}{2}-\frac{1}{2}h)x^5+(\frac{3}{4}-\frac{1}{4}h)y+\frac{1}{2}-\frac{3}{4}h\\
f_2&=& (-\frac{1}{2}+\frac{1}{2}h)x^5+(\frac{1}{2}h)y-\frac{7}{8}+\frac{5}{8}h\\
f_3&=& (-\frac{1}{2}+h)x^4+(1+\frac{1}{2}h)xy-\frac{5}{4}h\\
f_4&=& (1-\frac{1}{2}h)x^4+(2-\frac{1}{2}h)xy-2+\frac{3}{4}h\\
\end{array}
\right.
$$
where $h\in [0,1]$, $x\leq 0$ and $y\leq 0$. Computing an interval Gr\"obner basis, we see only one triple $(\{\ \}, \{h^2-2h+3\}, \{1\})$ for which its Gr\"obner basis equals to $\{1\}$. Therefore the above system has no solution for $h\in [0,1)$. Note that our method deals with $[0,1)$ here and so we must check the system for $h=1$ separately. By this, the system has no solution again by considering the sign of $x$ and $y$.  
and so this system has no solution.

Note that another way to inform that these system have or have not any real solutions (and not the exact form of solutions), is to check their consistency by what we have stated in Algorithm 6.1, since all of variables have interval form in these systems. 
In doing so we must add three auxiliary polynomials $h+(h-1)\tilde{h}^2, x+\tilde{x}^2$ and $y-\tilde{y}^2$ to the System (3) and 
$h+(h-1)\tilde{h}^2, x+\tilde{x}^2$ and $y+\tilde{y}^2$ to the System (4), where $\tilde{h}, \tilde{x}$ and $\tilde{y}$ are some extra real variables. The result of course is that (3) is consistent however (4) is not.
\end{exmp}
\subsection{Real Factors}
One of the interesting problems in the context of interval polynomials is the {\it Divisibility Problem}  stated as follows (See [32]):

{\it
For an interval polynomial ${[f]}\in [\mathbb{R}]\{x_1,\ldots,x_n\}$ and a real polynomial $g\in \mathbb{R}[x_1.\ldots,x_n]$, determine
whether there is a polynomial $p \in \mathcal{F}({[f]})$ such that $g$ is a factor of $p$.
}

In [32] there is stated an efficient method based on linear programming techniques and nice properties of polytopes. In the sequel we exalin our method to solve this problem by interval Gr\"obner basis.
To continue, let us say that $g$ i-divides $[f]$ whenever the answer of the above problem is {\it yes} (Note that the letter "i" stands for interval). By this conception we can now explain the following criterion based on interval Gr\"obner basis. 
\begin{thm}
\label{idiv}
Using the above notations, let $[S]=\{ [f], g\}\subset [\mathbb{R}]\{x_1,\ldots,x_n\}$ and $[\mathcal{G}]$ be a reduced interval Gr\"obner system for $[S]$ with respect to a monomial ordering $\prec$. Then $g$ i-divides $[f]$ if and only if there exists a triple $(E,N,G)\in [\mathcal{G}]$ where $G=\{g\}$ and ${\bf V}(E) \setminus {\bf V}(N) \ne \{0\}$.  
\end{thm}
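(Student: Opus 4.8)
The plan is to reduce the statement to a Gr\"obner-theoretic fact about a single real polynomial and then lift it through the defining property of the interval Gr\"obner system. The elementary lemma I would isolate first is: \emph{for $p\in\R[x_1,\ldots,x_n]$, $g\mid p$ if and only if the reduced Gr\"obner basis of $\langle p,g\rangle$ with respect to $\prec$ is $\{g\}$} (here, as in the theorem, $g$ is understood to be monic, i.e. normalised so that $\LC(g)=1$). Indeed, if $g\mid p$ then $p\in\langle g\rangle$, so $\langle p,g\rangle=\langle g\rangle$, and since this ideal is principal its reduced Gr\"obner basis is the single monic generator $\{g\}$ for every monomial ordering. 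Conversely, if the reduced Gr\"obner basis of $\langle p,g\rangle$ is $\{g\}$ then, by uniqueness of the reduced Gr\"obner basis, $\langle p,g\rangle=\langle g\rangle$; hence $p\in\langle g\rangle$, i.e. $g\mid p$.

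Next I would pass to families exactly as in the proof of Theorem 6.3. Let $S^{*}$ be obtained from $[S]=\{[f],g\}$ by replacing the $t$ interval coefficients of $[f]$ by parameters $h_1,\ldots,h_t$ (the polynomial $g$ is fixed and introduces no parameter), and let $B=[\alpha_1,\beta_1)\times\cdots\times[\alpha_t,\beta_t)$ be the box of admissible parameter values. Every element of $\F([f])$ equals $p_a:=\sigma_a([f])$ for some $a\in B$, where $\sigma_a\colon h_j\mapsto a_j$, so that $\sigma_a(\langle S^{*}\rangle)=\langle p_a,g\rangle$. Because $[\mathcal{G}]$ is a \emph{reduced} interval Gr\"obner system, for every $a\in B$ there is a triple $(E_i,N_i,G_i)\in[\mathcal{G}]$ with $a\in\V(E_i)\setminus\V(N_i)$ and $\sigma_a(G_i)$ the reduced Gr\"obner basis of $\langle p_a,g\rangle$; and conversely every triple of $[\mathcal{G}]$ is non-redundant, so $[\V(E_i)\setminus\V(N_i)]\cap B\neq\emptyset$ and on that set $\sigma_a(G_i)$ is the reduced Gr\"obner basis of $\langle p_a,g\rangle$. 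Feeding the lemma into this, $g$ i-divides $[f]$ --- necessarily via a nonzero $p_a$, hence via a tuple $a\neq 0$ (a nonzero $p_a$ has a nonzero coefficient, so some $a_j\neq 0$) --- if and only if some triple $(E_i,N_i,G_i)$ carries an admissible tuple $a\neq 0$ in $\V(E_i)\setminus\V(N_i)$ with $\sigma_a(G_i)=\{g\}$. The remaining point is to replace ``$\sigma_a(G_i)=\{g\}$'' by ``$G_i=\{g\}$''. For that I would use that along a single branch the leading monomials are constant, so throughout $\V(E_i)\setminus\V(N_i)$ the ideal $\langle p_a,g\rangle$ has initial ideal $\langle\LM(g)\rangle$; as $\LM(g)$ is a single monomial, the reduced parametric basis $G_i$ is a single polynomial with leading monomial $\LM(g)$ that specialises to the monic $g$ on the whole region, and reducedness --- together with the fact that $g$ carries no parameter --- forces $G_i=\{g\}$. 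The converse is immediate: if $G_i=\{g\}$ and $\V(E_i)\setminus\V(N_i)\neq\{0\}$, choose an admissible $a\neq 0$ in the region and obtain $g\mid p_a$ with $p_a\neq0$.

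The step I expect to be the main obstacle is precisely this identification of the syntactic condition $G_i=\{g\}$ with the semantic one $\sigma_a(G_i)=\{g\}$ for the admissible specialisations: one must argue, from the meaning of a \emph{reduced} interval Gr\"obner system (cf. Definition 5.5) and from constancy of the initial ideal along a branch, that $G_i$ cannot be a spurious parametric scalar multiple of $g$, nor $g$ plus a tail that vanishes on $\V(E_i)$. A second, smaller issue is the bookkeeping around the origin: i-divisibility here must be read through a \emph{nonzero} $p\in\F([f])$, since otherwise $g\mid 0$ would make the problem vacuous whenever $0$ lies in every coefficient interval, and the hypothesis $\V(E)\setminus\V(N)\neq\{0\}$ is exactly what rules out the branch that only witnesses the trivial divisor $p=0$; one also has to keep the affine set $\V(E_i)\setminus\V(N_i)$ separate from the real box $B$, using non-redundancy of the triples of $[\mathcal{G}]$ to secure a genuine admissible witness.
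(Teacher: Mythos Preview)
Your approach is essentially the same as the paper's: reduce i-divisibility to the statement that for some admissible specialisation $\sigma$ one has $g\mid\sigma([f])$ with $\sigma([f])\neq 0$, observe that this forces the reduced Gr\"obner basis of $\langle\sigma([f]),g\rangle$ to be $\{g\}$, and read this off from a branch of the interval Gr\"obner system. The paper's own proof is a two-line sketch that only argues the forward direction and simply asserts that the branch must then carry $G=\{g\}$; your write-up is considerably more careful, supplying the elementary lemma, treating both directions, and --- importantly --- isolating the two genuinely delicate points (that $\sigma_a(G_i)=\{g\}$ must be upgraded to the syntactic $G_i=\{g\}$, and that the condition $\mathbf{V}(E)\setminus\mathbf{V}(N)\neq\{0\}$ is meant to secure a nonzero \emph{admissible} witness rather than merely a nonzero complex one). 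These are exactly the places where the paper's argument is tacit, and your flagging of them is appropriate; the residual imprecision you note about matching the algebraic locus with the real box $B$ is a feature of the theorem's formulation rather than a flaw in your reasoning.
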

\begin{proof}
It is easy to see that $g$ i-divides $[f]$ if and only if there exists an specialization $\sigma$ for which $g | \sigma([f])$ and of course $\sigma([f])\ne 0$, by the statement of divisibility problem. This implies that $\sigma([S])$ can be expressed only by $ \{ g \}$ and therefore there exists a pair of parametric sets $(E,N)$ such that $(E,N,\{g\}) \in [\mathcal{G}]$.
\end{proof}
\begin{exmp}
We are going to solve the divisibility problem for $$[f]=[-1,1]x^2+[-3,1]y^2+[1/2,2]xy\in [\mathbb{R}]\{x,y\}$$ and $g=x-\sqrt{2}y\in \mathbb{R}[x,y]$. By computing a reduced interval Gr\"obner basis for $[S]=\{ [f], g \}$, we find the triple 
$$(\{c\sqrt{2}+2a+b\}, \{1\}, \{x-\sqrt{2}y\})$$
where $a,b$ and $c$ denote inner values of the intervals $[-1,1]$, $[-3,1]$ and $[1/2,2]$. This means that if the values of $a,b$ and $c$ satisfy the equation $c\sqrt{2}+2a+b = 0$, then there exists some $p \in \mathcal{F}([f])$ such that $g | p$. For instance, by evaluating $a=1/3, b=-2$ and so $c=2 \sqrt{2}/3$ we find $p=1/3x^2-2y^2+2 \sqrt{2}/3 xy \in \mathcal{F}([f])$ which is divided by $g$ (Note that $p=1/3(x+3\sqrt{2}y) g$). Therefore $g$ i-divides $[f]$.
 
\end{exmp} 

\begin{rem}
We can use Theorem 7.2 for further aims. Let $f,g \in \mathbb{R}[x_1,\ldots,x_n]$ where $g$ does not divide $f$. Then one can use Theorem 7.2 to find a polynomial $\tilde{f}$ with the same coefficients of $f$ which contain a few perturbation and $g\mid \tilde{f}$. In doing so, one can convert $f$ to an interval polynomial $[f]$ by putting the interval $[c-\epsilon, c+\epsilon]$ instead of the coefficient $c$, for each coefficient $c$ appearing in $f$. Then using Theorem 7.2 one can increase the $\epsilon$ up enough until $g$ i-divides $[f]$ with the desired precision. 
\end{rem}

\section{Conclusion}
In the current paper we have introduced the concept of interval Gr\"obner system as a novel computational tool to analyse interval polynomial systems. We have further designed a complete algorithm to compute it using the existing methods to analyse parametric polynomial systems. This concept can solve some important problems from the applied and engineering research fields such as solving fuzzy polynomial system, as devoted in this paper.


\section*{Acknowledgement}
The authors would like to express their sincere thanks to Professor Deepak Kapur,  
who devoted his time and knowledge during the preparation of this paper. 
The third author would also like to give an special thanks to professor Prungchan Wongwises for her kindness on posting her Ph. D. thesis for him.

\bibliographystyle{plain}

\end{document}